\documentclass[acmsmall]{acmart}

\usepackage{lineno}   
\usepackage{booktabs} 
\usepackage{url}  

\usepackage{multirow}  
\usepackage{color}  
\newtheorem{strategy}{Strategy}    
\newtheorem{upper bound}{Upper bound}

\usepackage{diagbox}
\usepackage{amsmath}
\usepackage{arydshln} 
\usepackage{color} 

\usepackage[linesnumbered,ruled]{algorithm2e} 

\SetAlFnt{\small}
\SetAlCapFnt{\small}
\SetAlCapNameFnt{\small}
\SetAlCapHSkip{0pt}
\IncMargin{-\parindent}


\usepackage{xspace}
\usepackage{multirow}
\usepackage{subfigure}


\setcopyright{acmcopyright}
\acmJournal{TKDD}



\begin{document}

\title{US-Rule: Discovering Utility-driven Sequential Rules}


\author{Gengsen Huang}
\affiliation{ 
	\institution{Jinan University}
	\city{Guangzhou}
	\state{Guangdong}
	\country{China}
}
\email{hgengsen@gmail.com}

\author{Wensheng Gan}
\authornote{This is the corresponding author}
\affiliation{
	\institution{Jinan University}
	\city{Guangzhou}
	\state{Guangdong}
	\country{China}
}
\email{wsgan001@gmail.com}

\author{Jian Weng}
\affiliation{
	\institution{Jinan University}
	\city{Guangzhou}
	\state{Guangdong}
	\country{China}
}
\email{cryptjweng@gmail.com}

\author{Philip S. Yu}
\affiliation{%
  \institution{University of Illinois at Chicago}
  \city{Chicago}
  \state{IL}
  \country{USA}
  \postcode{60607}
}
\email{psyu@uic.edu}


\begin{abstract}
	Utility-driven mining is an important task in data science  and has many applications in real life. High utility sequential pattern mining (HUSPM) is one kind of utility-driven mining. HUSPM aims to discover all sequential patterns with high utility. However, the existing algorithms of HUSPM can not provide an accurate probability to deal with some scenarios for prediction or recommendation. High-utility sequential rule mining (HUSRM)  was proposed to discover all sequential rules with high utility and high confidence. There is only one algorithm proposed for HUSRM, which is not enough efficient. In this paper, we propose a faster algorithm, called US-Rule, to efficiently mine high-utility sequential rules. It utilizes rule estimated utility co-occurrence pruning strategy (REUCP) to avoid meaningless computation. To improve the efficiency on dense and long sequence datasets, four tighter upper bounds (LEEU, REEU, LERSU, RERSU) and  their corresponding pruning strategies (LEEUP, REEUP, LERSUP, RERSUP) are proposed. Besides, US-Rule proposes rule estimated utility recomputing pruning strategy (REURP) to deal with sparse datasets. At last, a large number of experiments on different datasets compared to the state-of-the-art algorithm demonstrate that US-Rule can achieve better performance in terms of execution time, memory consumption and scalability.
\end{abstract}

\keywords{data mining, pattern mining, sequential rule, utility mining}

\authorsaddresses{
Authors' addresses: 
G. Huang, W. Gan, and J. Weng, College of Cyber Security, Jinan University, Guangzhou, China; email: hgengsen@gmail.com, wsgan001@gmail.com, and cryptjweng@gmail.com. P. S. Yu, Department of Computer Science, University of Illinois at Chicago, Chicago, IL, USA; email: psyu@uic.edu.
}

\begin{CCSXML}
<ccs2012>
   <concept>
       <concept_id>10002951.10003317</concept_id>
       <concept_desc>Information systems~Data mining</concept_desc>
       <concept_significance>500</concept_significance>
       </concept>
   <concept>
       <concept_id>10010147.10010257</concept_id>
       <concept_desc>Computing methodologies~Machine learning</concept_desc>
       <concept_significance>300</concept_significance>
       </concept>
 </ccs2012>
\end{CCSXML}

\ccsdesc[500]{Information Systems~Data mining}

\ccsdesc[300]{Applied computing~Business intelligence} 

\ccsdesc[300]{Computing methodologies~Machine learning}

\maketitle

\section{Introduction}

There are many pattern discovery techniques in data mining, and they often have lots of applications in real world. Frequent pattern mining (FPM) \cite{agrawal1995mining} is one of them and plays an import role in medical diagnosis \cite{nayak2019heart}, market basket analysis \cite{maske2018survey}, and query recommendation \cite{li2008pfp}. If the support value of a pattern meets the user-defined minimum support (\textit{minsup}) value, it is a frequent pattern. In general, FPM can discover all frequent patterns from a transaction database. These mined patterns provide analysis results for various services. Compared to FPM, sequential pattern mining (SPM) \cite{fournier2017survey,gan2019survey} considers the chronological order of items in the database. In a transaction record, the order of items generally follows the occurred order. SPM can discover sequential patterns within a certain time frame with better analysis. SPM has many applications in real life, as reviewed in \cite{fournier2017survey}, such as bioinformatics \cite{wang2007frequent}, analysis of customer consumption behavior \cite{srikant1996mining}, and web-page click event mining \cite{fournier2012using}. However, both FPM and SPM mainly consider the frequency of each items and assume that each items in one transaction event can only appear once. This limitation often not holds in many real applications. In a real transaction record, not only the quantity of commodities should be considered, but also we need to pay more attention to the price of commodities. Generally speaking, in business, we should consider profits more than the number of sales. For example, there is no doubt that the profit of $<$\textit{diamond}, \textit{ring}$>$ is far greater than the profit of $<$\textit{milk}, \textit{bread}$>$. However, as people buy milk and bread daily, the frequency of $<$\textit{milk}, \textit{bread}$>$ in the transaction database is often very large. Therefore, FPM and SPM will consider $<$\textit{milk}, \textit{bread}$>$ as a more valuable pattern. Although $<$\textit{diamond}, \textit{ring}$>$ can bring higher profits, its sales volume is too small. If \textit{minsup} is set to a larger value in FPM or SPM, the really valuable pattern $<$\textit{diamond}, \textit{ring}$>$ may be missed. For the sake of the problems posed by this situation, Ahmed \textit{et al.} \cite{ahmed2010mining} have introduced the concept of utility and proposed the problem of high-utility pattern mining (HUPM). HUPM considers internal utility and external utility of items to discover more valuable patterns. In general, the internal utility is the occurred quantity (e.g., purchased quantity) of items and the external utility is the unit utility (e.g., price) of each items. Then, the product of internal utility and external utility is the total utility of each items in the database. Different from FPM, HUPM is more challenging to find out all high-utility patterns (HUPs). The support value of a pattern is anti-monotonic, while in HUPM, if a pattern satisfies the minimum utility threshold (\textit{minutil}), the utility of its super-patterns may be smaller, equal to, or larger than the utility of itself. For example, consider patterns $<$\textit{milk}, \textit{bread}$>$ and $<$\textit{milk}, \textit{bread}, \textit{bike}$>$, the latter will be more profitable. However, the purchased number of $<$\textit{milk}, \textit{bread}$>$, as we mentioned about, is a very large value. Hence, the total profit of these two patterns is difficult to compare. After that, for further considering the order of items in a transaction record, high-utility sequential pattern mining (HUSPM) \cite{yin2012uspan} was proposed. HUSPM considers the chronological order of items in a pattern to address this issue. For $<$\textit{milk}, \textit{bread}$>$, we only know that customers purchase these shopping goods in HUPM. While in HUSPM, we can know that after buying milk, customers will continue to purchase bread. Obviously, these high-utility sequential patterns can carry more crucial and valuable information. In many real applications such as smart retail, HUSPM can utilize the chronological order of these items in a pattern to promote sales and generate high profits.


There are many algorithms about SPM \cite{han2001prefixspan, srikant1996mining, zaki2001spade} and HUSPM \cite{gan2020proum, gan2020fast, wang2014efficient, yin2012uspan} proposed to deal with many special scenarios in real-word, and they only use frequency or total utility to decide what patterns are useful. In some scenarios, if we want to predict what will happen next, SPM and HUSPM can only make a rough prediction. Even if the chronological order of items is considered in SPM and HUSPM, it can not provide accurate prediction rate for users. For instance, if we have two frequent sequential patterns $s_1$ = $<$\textit{milk}, \textit{egg}$>$ and $s_2$ = $<$\textit{milk}, \textit{egg}, \textit{bread}$>$, we can know that some customers would buy bread after buying milk and egg. However, we don't know the proportion of these customers in all customers. That is, we can only know approximately the predicted items without knowing the accuracy. Sequential rule mining (SRM) was proposed by Fournier-Viger \textit{et al.} \cite{fournier2015mining} and it uses the concept of confidence to find out all sequential rules (SRs) whose support value and confidence value satisfy the \textit{minsup} and the minimum confidence (\textit{minconf}) predefined by users. In general, SR is denoted as $\{X\}$ $\rightarrow$ $\{Y\}$. For a SR \{\textit{milk}, \textit{egg}\} $\rightarrow$ \{\textit{bread}\}, if its confidence value is 0.5, we can know that 50\% customers will buy bread after buying milk and egg. SRM is widely applied to various fields, such as business transactions, weather forecast, and financial analysis. Because the \textit{minsup} is also set in SRM, SRM also faces the problem of SPM. For example, the support value of \{\textit{milk}, \textit{egg}\} $\rightarrow$ \{\textit{bread}\} is greater than \{\textit{diamond}, \textit{ring}\} $\rightarrow$ \{\textit{rose}\}. If the confidence values of these two SR are high and satisfy the \textit{minconf}, but the \textit{minsup} is set too high, the more profitable SR \{\textit{diamond}, \textit{ring}\} $\rightarrow$ \{\textit{rose}\} may be ignored. To this end, Zida \textit{et al.} \cite{zida2015efficient} combined the concept of utility with SRM and proposed high-utility sequential rule mining (HUSRM). HUSRM considers \textit{minconf} and \textit{minutil} to generate all high-utility sequential rules (HUSRs). HUSRM can be seen as a combination of SRM and HUSPM. It can also be seen as a special version of SRM when both the external utility and internal utility are equal to one. As a result, HUSRM has many applications and extensions. Although HUSRM plays an important role than SRM in many real applications, there are few relevant studies about it. As far as we know, there is only one algorithm \cite{zida2015efficient} proposed to find out the complete set of HUSRs.

In light of some relevant studies in HUIM and HUSPM, in this paper, we propose a more efficient algorithm, named US-Rule. Inspired by the use of co-occurrence map \cite{fournier2014fast, fournier2014fhm}, we propose rule estimated utility co-occurrence map (REUCM) and rule estimated utility co-occurrence pruning strategy (REUCP) to avoid the unnecessary computations of US-Rule. There are four  more tight upper bounds introduced by US-Rule, including left expansion estimated utility (LEEU), right expansion estimated utility (REEU), left expansion reduced sequence utility (LERSU), and right expansion reduced sequence utility (RERSU). Based on these upper bounds, we design corresponding novel data structures and propose several pruning strategies. The major contributions of this paper are as follows:

\begin{itemize}
	\item A novel algorithm, called US-Rule, is proposed to discover the complete set of HUSRs. Compared to the only algorithm HUSRM, US-Rule with the use of REUCP can achieve better performance. REUCP can avoid meaningless computations of sequential rules.
	
	\item For better efficiency on dense and long sequence datasets, we propose four tighter upper bounds and design their corresponding auxiliary data structures. Based on these upper bounds, we utilize depth-first search pruning strategies (LEEUP and REEUP) and width-first search pruning strategies (LERSUP and RERSUP) to reduce the number of expansion of US-Rule. 
	
	\item To address the issue of US-Rule on sparse and short sequence datasets, we propose the rule estimated utility recomputing pruning strategy (REURP) to reduce unnecessary rule expansion and improve efficiency by recalculating the SEU of items multiple times and removing unpromising items from the database.
	
	\item Experiments using US-Rule and its different variants (with different strategies) on both real and synthetic datasets show that US-Rule with different strategies can achieve different degrees of improvement over the state-of-the-art HUSRM algorithm on different datasets.
\end{itemize}

The remainder part of this paper is organized as follows. In Section \ref{sec:relatedwork}, we review the related work about SRM, HUSPM, and HUSRM. Then, we introduce the definition of HUSRM and formalize the problem of HUSRM in Section \ref{sec:preliminaries}. The proposed  US-Rule algorithm  with several pruning strategies is presented in Section \ref{sec:algorithm}. Experimental results and evaluation on different datasets are shown and discussed in Section \ref{sec:experiments}. Finally, in Section \ref{sec:conclusion}, we conclude this paper and discuss the future work of our research.

\section{Related Work}
\label{sec:relatedwork}

\subsection{Sequential rule mining}
\label{section:SRM} 

Since frequent pattern mining (FPM) \cite{agrawal1995mining} was proposed, several sequential rule mining (SRM) algorithms have been proposed. A sequential rule can be divided into two parts, denoted as $\{X\}$ $\rightarrow$ $\{Y\}$, $X$ $\cap$ $Y$ = $\varnothing$. $X$ is the antecedent of a sequential rule, and $Y$ is the consequent. For a sequential rule, it must be satisfied the defined minimum support value (\textit{minsup}) and the minimum confidence value (\textit{minconf}). According to the total in order or partial in order, SRM can be divided to total-ordered SRM and partially-ordered SRM. The former means that all itemsets in a sequential rule depend on the order of itemsets in the original sequence. And the later indicates itemsets in antecedent or consequent of a sequential rule can be disorganized, but the order of an itemset which appears in the consequent of a sequential rule should be after the antecedent of a sequential rule. Partially-ordered SRM is the key research direction. RuleGen \cite{zaki2001spade} adopts a method of violent enumeration. Generating sequential rules from two sequential patterns. As the length of the traversed sequential pattern increases, the length of the obtained sequential rules also increases. Fournier-Viger \textit{et al.} proposed CMDeo \cite{fournier2012cmrules} and CMRules \cite{fournier2012cmrules} to improve efficiency. CMDeo is a two-phases-based algorithm. It first discovers association rules by using an association rule mining algorithm, and then these association rules are used to generate sequential rules. There is no doubt that since both RuleGen and CMDeo use the generate-and-test approach, they usually very inefficient. CMRules introduces the concept of left- and right- expansion of sequential rules, which greatly improves efficiency. The idea of left- or right- expansion through frequent items is clearly not good enough. Similar to PrefixSpan \cite{han2001prefixspan}, the RuleGrowth  algorithm  \cite{fournier2015mining} was further proposed, and it uses left- and right- expansion and pattern-growth approach for generating sequential rules, and then achieves excellent performance. To reduce the running time of SRM algorithms, an equivalence class method was proposed in ERMiner \cite{fournier2014erminer}. ERMiner continuously merges equivalence classes with the same antecedent or consequent, thus generating longer sequential rules. ERMiner also introduces a new data structure, called Sparse Count Matrix (SCM), to prune the search space by avoiding useless emergence. However, the memory consumption of ERMiner is greater than RuleGrowth.

In addition to the algorithms mentioned above, there are other popular topics in SRM. Several constraint-based SRM algorithms that can obtain valuable sequential rules are TRuleGrowth \cite{fournier2015mining}, BMSRIC-R \cite{van2021mining}, and IFERMiner \cite{thanh2020mining}. TRuleGrowth uses a sliding window to discover the sequential rules over a time period. BMSRIC-R constrains the itemset in the sequential rules to obtain more compact rules, and IFERMiner discovers the required sequential rules in a quantitative sequence database. In addition, TNS \cite{fournier2013tns} was proposed to address top-$k$ SRM that discovers top-$k$ SRs without setting \textit{minsup}. In incremental mining, IERMiner \cite{drozdyuk2020incremental} was proposed to deal with the dynamic sequence database.

\subsection{High-utility sequential pattern mining}
\label{section:HUPM}

Since Agrawal and Srikant \cite{agrawal1995mining} proposed the sequential pattern mining (SPM) to discover all frequent sequential patterns, a growing number of researchers are focusing on this area. There are many algorithms about SPM \cite{fournier2017survey,gan2019survey}, such as GSP \cite{srikant1996mining}, PrefixSpan \cite{han2001prefixspan}, SPADE \cite{zaki2001spade}, SPAM \cite{ayres2002sequential}, and non-overlapping SPM \cite{wu2021ntp}. GSP is a relatively violent method that generates longer sequential patterns by continuously merging sub-sequential patterns. It mainly utilizes the Apriori property and the support decreases with the merging of sequential patterns, thus this method is very time consuming and memory cost. In order to improve efficiency and save much memory, PrefixSpan adopts the projection mechanism to grow sequential patterns. The main idea of PrefixSpan is to avoid generating a large number of candidates by using two pruning strategies. However, PrefixSpan faces the problem of memory consumption when the projected database is too large. And then, SPADE with the equivalence class and SPAM with the bitmap both optimize the memory consumption problem. Nevertheless, the main problem of SPM is the use of support to mine sequential patterns. The researchers then introduced utility into SPM and proposed a new framework called utility mining \cite{gan2019surveyU} as well as many utility-related calculation methods. The goal of high-utility sequential pattern mining (HUSPM) \cite{gan2019surveyU} is to find out all high-utility sequential patterns (HUSPs) whose total utility is not less than the predefined minimum utility (\textit{minutil}). The concept of utility takes into account some theories of economics (e.g., profit, risk, and importance).

In SPM, many algorithms can utilize the Apriori property to prune unpromising sequential patterns. Unfortunately, this property makes no effect on HUSPM. Due to the use of utility, the analysis of HUSPM is much more difficult. Utility is a not anti-monotonic and not monotonic value \cite{song2016high,gan2019surveyU}. To address this problem, Ahmed \textit{et al.} \cite{ahmed2010mining} introduced sequence weighted utilization (SWU) as a downward closure property. SWU is the first upper bound about HUSPM. Similar to the Apriori property, we can utilize SWU to reduce search space. And then, UtilityLevel (an Apriori-like algorithm) and UtilitySpan (an algorithm based on PrefixSpan) were proposed by Ahmed \textit{et al.} \cite{ahmed2010novel}. Both UtilityLevel and UtilitySpan are two-phases-based algorithms. They use SWU to discover all candidate sequences at first, then extract real HUSPs according to \textit{minutil}. Next, inspired by the idea of tree, UWAS-tree \cite{ahmed2010mining} and IUWAS-tree \cite{ahmed2010mining} are proposed to find HUSPs from web log data. In addition, Shie \textit{et al.} \cite{shie2011mining} designed the UMSP algorithm for analyzing mobile data. However, these algorithms face the problem that too much memory consumption is caused by a large number of candidate patterns. What's more, they only consider single-item sequences whose itemsets only contain one item. In order to deal with multiple-item-based sequences, Yin \textit{et al.} \cite{yin2012uspan} proposed a novel algorithm, called USpan. USpan can discover multiple-item sequences through its utility-matrix structure. Thus the order of items in itemsets can be considered, and the discovered HUSPs are more valuable. Besides, USpan also utilizes SWU and sequence projected utilization (SPU) in two designed depth  and width pruning strategies to prune search space. Both the pruning strategies can be applied into the lexicographic tree to improve efficiency. Unfortunately,  USpan not only will generate too many promising sequences with a high SWU value, but also will miss some real HUSPs due to the use of SPU. Alkan \textit{et al.} \cite{alkan2015crom} then designed HuspExt for better efficiency. Lan \textit{et al.} \cite{lan2014applying} also introduced a projection-based algorithm namely PHUS to improve the mining performance. PHUS utilizes a sequence-utility upper-bound model (SUUB) to discover all HUSPs, and SUUB introduces the maximum utility measure to calculate the more accurate utility of a sequence. Wang \textit{et al.} \cite{wang2016efficiently} proposed HUS-Span for further improve efficiency. HUS-Span introduces a new upper bound, called the prefix extension utility (PEU). HUS-Span discovers all HUSPs, but it's efficiency is not very good. A large number of candidate sequences will cause too much memory in HUS-Span. Recently, Gan \textit{et al.} \cite{gan2020proum} proposed a projection-based utility mining algorithm, called ProUM. ProUM can quickly grow sequences through a new data structure, called utility-array. Besides, the sequence extension utility (SEU) and its corresponding pruning strategies are also introduced in ProUM. Subsequently, Gan \textit{et al.} \cite{gan2020fast} proposed a faster algorithm, called HUSP-ULL, which utilizes a new data structure, called UL-list, to quick generate projected sub-databases. Besides, two powerful pruning strategies are also introduced by HUSP-ULL.

There are many fast algorithms have been proposed to discover HUSPs. In recent years, researchers have also extended HUSPM to other fields. TKHUS-Span \cite{wang2016efficiently} and TKUS \cite{zhang2021tkus} are top-$k$ HUSPM algorithms that can find out top-$k$ high-utility sequences without setting \textit{minutil}.  Recently, the TUSQ approach \cite{zhang2021tusq} was proposed to deal with a new problem of utility-driven targeted sequence querying, and OSUMS \cite{zhang2021shelf} was introduced for on-shelf utility mining (OSUM). HAOP-Miner \cite{wu2021haop}  aims at mining high-average utility one-off sequential patterns.

\subsection{High-utility sequential rule mining}
\label{section:HUSRM}

Although many algorithms for SRM have been proposed and SRM has been studied in many extensions, there is only one algorithm makes use of utility to discover high-utility sequential rules. Different from SRM, HUSRM assumes that the input sequences have rich utility factor but do not have duplicate items. To the best of our knowledge, Zida \textit{et al. }\cite{zida2015efficient} first defined the concept of HUSRM and proposed the first algorithm called HUSRM. HUSRM introduces the \textit{minutil} as a measure to judge a sequential rule whether is a high utility sequential rule (HUSR). Based on one data structure called utility-table and several optimizations, the HUSRM algorithm can efficiently mine all HUSRs. The information of each rule is stored in its utility-table, thus HUSRM can accelerate the process of utility calculation of sequential rules. Besides, five optimizations are used to improve efficiency. The first optimization is to remove the unpromising items at an early stage, and this can avoid useless items be a part of a rules. The second one is to remove the unpromising rules at an early stage, then reduce the count of expansion. The third one saves memory consumption by the use of bitmap. The fourth one also can save memory by the use of more compact utility-table for left expansion. The last one utilizes tighter upper bounds as expansion condition to improve efficiency. After that, several extensions of HUSRM have been studied. For example, Zhang \textit{et al.} \cite{zhang2020hunsr} incorporated the concept of negative sequence  into HUSRM and proposed a solution to find out high-utility negative sequential rules.

\section{Preliminaries and Problem Formulation}
\label{sec:preliminaries}

This section introduces fundamental concepts and notations used in this paper. And then, we formalize the problem definition of high-utility sequential rule mining.

\subsection{Notations and concepts}

\begin{definition}[Sequence database]
	\rm Let $I$ = \{$i_1$, $i_2$, $\cdots$, $i_l$\} be a set of $l$ distinct items. An item is denoted as $(i, q(i))$, and $q(i)$ is its internal utility. An itemset (element) $X$ is a subset of $I$, and $X$ = \{$i_1$, $i_2$, $\cdots$, $i_m$\} $\subseteq$ $I$. We assume that all distinct items in $X$ are ordered according to $\succ_{lex}$, which is a lexicographical order $(a$ $\textless$ $b$ $\textless$ $\cdots$ $\textless$ $z)$. For a sequence $s$, it is composed of itemsets and can be denoted as $s$ = $<$$I_1$, $I_2$, $\cdots$, $I_n$$>$, $I_k$ $\subseteq$ $I$ $(1 \le k \le n)$. A sequence database $\mathcal{D}$ consist of all sequences, $\mathcal{D}$ = $<$$s_1$, $s_2$, $\cdots$, $s_p$$>$, and each sequence own its unique identifier (\textit{SID}). Like HUSRM \cite{zida2015efficient}, we assume that an item cannot appear in one sequence more times. Each different item has its corresponding external utility, denoted as $eu(i)$. Note that both internal utility and external utility are positive.
\end{definition}

\begin{table}[h]
	\centering
	\caption{Sequence database}
	\label{table1}
	\begin{tabular}{|c|c|}  
		\hline 
		\textbf{SID} & \textbf{Sequence} \\
		\hline  
		\(s_{1}\) & $<$$(a, 1)$ $(b, 2)$ $\{(c, 1) (g, 1)\}$$>$ \\ 
		\hline
		\(s_{2}\) & $<$$(a, 1) (c, 1)$ $\{(e, 1) (g, 1)\}$$>$ \\  
		\hline  
		\(s_{3}\) & $<$$(b, 1)$ $(d, 1)$ $(g, 2)$$>$ \\
		\hline  
		\(s_{4}\) & $<$$(e, 1)$ $(f, 2)$$>$ \\
		\hline
	\end{tabular}
\end{table}

\begin{table}[!h]
	\caption{External utility table}
	\label{table2}
	\centering
	\begin{tabular}{|c|c|c|c|c|c|c|c|}
		\hline
		\textbf{Item}	    & $a$	& $b$	& $c$	& $d$	& $e$	& $f$  & $g$ \\ \hline 
		\textbf{Unit utility}	    & 2     & 1     & 4     & 7     & 1     & 2    & 1 \\ \hline
		
	\end{tabular}
\end{table}

As shown in Tables \ref{table1} and \ref{table2}, in this paper, the example sequence database has five sequences. We use $s_1$, $s_2$, $s_3$, $s_4$, and $s_5$ to present their \textit{SID}. Each item is a single letter, owning its internal utility in the itemset and is ordered according to $\succ_{lex}$. If an itemset contain only one item, we omit curly brackets to present it. For each item, Table \ref{table2} shows their external utilities. For example, given an item $(a,1)$, we can know that $q(a)$ = 1 and $eu(a)$ = 2.

\begin{definition}[Sequential rule and sequential rule size]
	\label{definition:SR}
	\rm A sequential rule is denoted as $r$ = $X$ $\rightarrow$ $Y$, and $X$ is the antecedent of sequential rule $r$ while $Y$ is the consequent. Both $X$ and $Y$ are not empty and unordered itemsets, and $X$ $\cap$ $Y$ = $\varnothing$. For partially-ordered SRM, if $X$ appears in a sequence, $Y$ will occur after $X$. Note that we stipulate that the items in a sequence appear no more than once. Therefore, we argue that $k$ is the number of distinct items in the antecedent of $r$, and $m$ is the number of distinct items in the consequent of $r$. Then, the size of $r$ can denoted as $k$ $\ast$ $m$. Given another sequential rule $r_1$ and its size is $f$ $\ast$ $h$, if the size of $r_1$ is larger than $r$, it must be $f$ $\textgreater$ $k$ and $h$ $\ge$ $m$, or $f$ $\ge$ $k$ and $h$ $\textgreater$ $m$.
\end{definition}

For example, as shown in Table \ref{table1}, $r_1$ = $\{a, b\}$ $\rightarrow$ $\{c\}$ is a sequential rule, and its size is 2 $\ast$ 1. Given another sequential rule $r_2$ = $\{a, b\}$ $\rightarrow$ $\{c, g\}$, we can know that $r_2$ is larger than $r_1$. It is because the size of $r_2$ is 2 $\ast$ 2.

\begin{definition}[Support and confidence]
	\rm We use \textit{seq}$(r)$ to present the sequence containing the sequential rule $r$. For a sequential rule $r$ = $X$ $\rightarrow$ $Y$, its support value is the number of sequences in $\mathcal{D}$ whose sub-sequences contain \textit{seq}$(r)$. And it can defined as \textit{sup}$(r)$ = \textit{seq}$(r)$ / $\lvert$$\mathcal{D}$$\rvert$. While the confidence of rule $r$ is the proportion of $Y$ occurs after $X$ and it can be defined as \textit{conf}$(r)$ = \textit{sup}$(r)$ $/$ \textit{sup}$(X)$.
\end{definition}

For example, from Table \ref{table1}, we can clearly see that the support of $r$ = $\{a\}$ $\rightarrow$ $\{c, g\}$ is 2, because $r$ appears in sequences $s_1$ and $s_2$. And the confidence of $r$ is 1, because $\{a\}$ appears in sequences $s_1$ and $s_2$, and $\{a, c, g\}$ also appears in these two sequences. Therefore, \textit{sup}$(r)$ = 2, and \textit{conf}$(r)$ = 1.

\begin{definition}[Utility of an item/itemset in a sequence]	
	\rm We specify that for an item in a sequence, its utility value is equal to the product of its internal utility and external utility. It can denoted as $u(i, s_l)$ and defined as $u(i, s_l)$ = $q(i,s_l)$ $\times$ $eu(i)$. Similarly, for an itemset $I$, its utility in a sequence is the sum of the utility of all items in this itemset. It can denoted as $u(I, s_l)$ and defined as $u(I, s_l)$ = $\sum_{i \in I}$ $q(i,s_l)$ $\times$ $eu(i)$.
\end{definition}

\begin{definition}[Utility of an item/itemset in a database]	
	\rm As for the utility of an item $i$ in a database $\mathcal{D}$ is the sum of utilities in all sequences. It can be denoted as $u(i)$ and defined as $u(i)$ = $\sum_{s_l \in \mathcal{D}}$ $q(i, s_l)$ $\times$ $eu(i)$. For an itemset $I$, its utility in a database $\mathcal{D}$ is the total utility of all itemsets in this database. It can be denoted as $u(I)$ and defined as $u(I)$ = $\sum_{i \in I \land I \subseteq s_l \land s_l \in \mathcal{D}}$ $q(i,s_l)$ $\times$ $eu(i)$.
\end{definition}

For example, in Table \ref{table1}, the utility of item $(a,1)$ in sequence $s_1$ = $q(a, s_1)$ $\times$ $eu(a)$ = 2, and the utility of itemset $\{(c,1) (g,1)\}$ = $q(c,s_1)$ $\times$ $eu(c)$ + $q(g,s_1)$ $\times$ $eu(g)$ = 4 + 1 = 5.  $u(a)$ = $q(a,s_1)$ $\times$ $eu(a)$ + $q(a,s_2)$ $\times$ $eu(a)$ = 2 + 2 = 4, and $u((c,g))$ = $q(c,s_1)$ $\times$ $eu(c)$ + $q(g,s_1)$ $\times$ $eu(g)$ = 4 + 1 = 5. We can know that the total utility of $a$ in $\mathcal{D}$ is 4, and the total utility of $(c,g)$ in $\mathcal{D}$ is 5.

\begin{definition}[Utility of a sequential rule in a sequence and database]	
	\rm The utility of a sequential rule $r$ in sequence $s_l$ can be denoted as $u(r, s_l)$ and defined as $u(r, s_l)$ = $\sum_{i \in r \land \textit{seq}(r) \subseteq s_l }$ $q(i, s_l)$ $\times$ $eu(i)$. The utility of a sequential rule $r$ in database $\mathcal{D}$ can be denoted as $u(r)$ and defined as $u(r)$ = $\sum_{i \in r \land \textit{seq}(r) \subseteq s_l \land s_l \in \mathcal{D}}$ $q(i, s_l)$ $\times$ $eu(i)$.
\end{definition}

For example, in Table \ref{table1}, the utility of sequential rule $r$ = $\{a\}$ $\rightarrow$ $g$ is $u(r)$ = $u(r,s_1)$ + $u(r,s_2)$ = 3 + 3 = 6, it is because $r$ appears in $s_1$ and $s_2$.

\subsection{Problem statement}

\begin{definition}[High utility sequential rule mining]	
	\rm We assume that the \textit{minutil} value is positive and the \textit{minconf} $\in$ $[0, 1]$, and they are predefined by users. Given these two thresholds and a sequence database $\mathcal{D}$, if a sequential rule $r$ is satisfied $u(r)$ $\ge$ \textit{minutil} and \textit{conf}$(r)$ $\ge$ \textit{minconf} simultaneously, we call it a high-utility sequential rule (HUSR). Otherwise, it is an invalid rule. The problem of high-utility sequential rule mining (HUSRM) is to discover a complete set of all HUSRs from a sequence database.
\end{definition}

\begin{table}[!htbp]
	\centering
	\caption{HUSRs in Table \ref{table1} when \textit{minconf} = 0.5 and \textit{minutil} = 10}
	\label{table3}
	
	\begin{tabular}{|c|c|c|c|c|}  
		\hline 
		\textbf{ID} & \textbf{HUSR} & \textbf{Support} & \textbf{Confidence} & \textbf{Utility}\\
		\hline 
		\(r_{1}\) & $\{a\}$ $ \rightarrow $ $\{c\}$ & 2 & 1.0 & 12\\ 
		\hline 
		\(r_{2}\) & $\{a\}$ $ \rightarrow $ $\{c, g\}$ & 2 & 1.0 & 14\\
		\hline 
		\(r_{3}\) & $\{b\}$ $ \rightarrow $ $\{d, g\}$ & 1 & 0.5 & 10\\
		\hline 
		\(r_{4}\) & $\{b, d\}$ $ \rightarrow $ $\{g\}$ & 1 & 1.0 & 10\\
		\hline
	\end{tabular}
\end{table}

Table \ref{table3} shows the discovered HUSRs under the setting of \textit{minutil} = 10 and \textit{minconf} = 0.5. From the results, we can see that totally four HUSRs are found, and they both satisfy the two predefined thresholds. Although $r$ = $\{b\}$ $\rightarrow$ $\{g\}$ has a high confidence and its support is 2, it is removed due to its low utility. Obviously, the results of HUSRM are very different from those of SRM.

\begin{definition}[The expansion of a sequential rule]
	\label{ESR}
	\rm US-Rule utilizes the approach similar to RuleGrowth \cite{fournier2015mining} to grow a sequential rule $r$ = $\{X\}$ $\rightarrow$ $\{Y\}$. This means that US-Rule also uses the left and right expansion. For an expanded item $i$ $\in$ $I$, the left expansion of $r$ is defined as $X$ $\cup$ $\{i\}$ $\rightarrow$ $Y$, and $i$ must be larger than all item in $X$ according to $\succ_{lex}$ and $i$ $\notin$ $Y$. While the right expansion of $r$ is defined as $X$ $\rightarrow$ $Y$ $\cup$ $\{i\}$, and $i$ must be larger than all item in $Y$ according to $\succ_{lex}$ and $i$ $\notin$ $X$. Let $r^\prime$ be a new rule expanded from $r$, the support of $r^\prime$ can be lower or equal to \textit{sup}$(r)$. As for the effect on confidence, the utility of $r^\prime$ can be lower, higher, or equal to \textit{conf}$(r)$. Similarly, the utility of $r^\prime$ can be lower, higher, or equal to $u(r)$. Relevant proofs can be referred to Ref. \cite{fournier2015mining, zida2015efficient}. In order to avoid generating the same sequential rule twice, US-Rule utilizes the following regulation: a sequential rule cannot perform the right expansion after it performs the left expansion.
\end{definition}

\section{The US-Rule Algorithm}
\label{sec:algorithm}

In this section, we introduce the key definitions and new upper bounds, and then present our designed data structures and pruning strategies. Finally, we present the complete pseudocode of US-Rule which is an one-phase rule-growth-based algorithm.

\subsection{Definition and upper bounds}

Note that the US-Rule algorithm adopts some concepts from previous studies in utility-based sequence mining to reduce the search space. Left expansion estimated utility (LEEU) and right expansion estimated utility (REEU) are inspired by prefix estimated utility (PEU), while left expansion reduced sequence utility (LERSU) and right expansion reduced sequence utility (RERSU) are inspired by reduced sequence utility (RSU). In addition to using the four new upper bounds, we also use the sequence estimated utility (SEU). The details can be referred to Ref. \cite{wang2016efficiently, yin2012uspan}.

\begin{definition}[Sequence utility]
	\rm For a sequence $s_l$, its sequence utility is the total utility of all items in this sequence, and it can be defined as \textit{SU}$(s_l)$ = $\sum_{i \in s_l}$ $q(i, s_l)$ $\times$ $p(i)$. 
\end{definition}

For instance, in Table \ref{table1}, the utility of sequence $s_1$, $s_2$, $s_3$, and $s_4$ are 9, 8, 10, and 5, respectively.

\begin{definition}[Sequence estimated utility of item/rule]
	\rm For an item $i$, its sequence estimated utility (SEU) is the sum of the utility of sequences containing $i$ and it can be defined as \textit{SEU}$(i)$ = $\sum_{i \in s_l \land s_l \in \mathcal{D}}$ \textit{SU}$(s_l)$. For a sequential rule $r$, its SEU is the SEU of \textit{seq}$(r)$ and it can be defined as \textit{SEU}$(r)$ = $\sum_{\textit{seq}(r) \subseteq s_l \land s_l \in \mathcal{D}}$ \textit{SU}$(s_l)$. Then \textit{SEU}$(i)$ $\ge$ $u(i)$ and \textit{SEU}$(r)$ $\ge$ $u(r)$.
\end{definition}

For example, in Table \ref{table1}, the SEU of $a$ is \textit{SEU}$(a)$ = 9 + 8 = 17, because $a$ appears in sequence $s_1$ and sequence $s_2$. Also the SEU of sequential rule $r$ = $\{e\}$ $\rightarrow$ $\{f\}$ is \textit{SEU}$(r)$ = 5, because only sequence $s_4$ contains $r$.

\begin{definition}[Promising item and promising rule]
	\rm If \textit{SEU}$(i)$ $\ge$ \textit{minutil}, the item $i$ is a promising item. Otherwise, it is an unpromising item. For a sequential rule $r$, if \textit{SEU}$(r)$ $\ge$ \textit{minutil}, $r$ is a promising rule. Otherwise, it is an unpromising rule.  
\end{definition}

For instance, if \textit{minutil} = 10, we can see that $f$ is an unpromising item, because \textit{SEU}$(f)$ = 5 $\textless$ 10. For a sequential rule $r$ = $\{e\}$ $\rightarrow$ $\{f\}$, $r$ is an unpromising rule, because \textit{SEU}$(r)$ = 5 $\textless$ 10.

\begin{strategy}[Unpromising items pruning strategy]
	\label{Strategy:UIP}
	\rm Given an item $i$, if $i$ is an unpromising item, US-Rule will remove it from the database. The reason is that if \textit{SEU}$(i)$ $\textless$ \textit{minutil}, the utility of a rule contains $i$ will be smaller than \textit{minutil}.
\end{strategy}

\begin{strategy}[Unpromising sequential rules pruning strategy]
	\label{Strategy:USRP}
	\rm Given a sequential rule $r$, according to unpromising sequential rules pruning strategy (USRP), if $r$ is an unpromising rule, US-Rule will terminate its expansion. The reason is that if \textit{SEU}$(r)$ $\textless$ \textit{minutil}, the utility of the expansion of a rule will be smaller than \textit{minutil}. 
\end{strategy}

Note that both Strategy \ref{Strategy:UIP} and Strategy \ref{Strategy:USRP} can be referred to Ref. \cite{zida2015efficient}.

\begin{definition}[Rule estimated utility co-occurrence pruning map]
	\rm We propose a novel data structure, called rule estimated utility co-occurrence pruning map (REUCM) to store the SEU of item $a$ and item $b$. Item $a$ must be in front of item $b$. Then there will be two situations. The first one is that the itemset of item $a$ precedes the itemset of item $b$. The second one is that item $a$ and  item $b$ appear in the same itemset. In general, using a matrix for storage will consume a lot of memory and there are many zeros in the matrix. Therefore, US-Rule utilizes a hash-table for the design. 
\end{definition}

For instance, the SEU of item $a$ and item $b$ can be denoted as \textit{REUCM}$(a, b)$ and defined as \textit{REUCM}$(a, b)$ = $\sum_{(\textit{seq}(\{a\} \rightarrow \{b\}) \subseteq s \; or \; (a,b) \subseteq s) \land s \in \mathcal{D} \land s \in \mathcal{D}}$ $u(s)$. Note that it is different from the estimated utility co-occurrence structure (EUCS) \cite{fournier2014fhm}. In EUCS, the order of item $a$ and item $b$ does not need to be considered. While in REUCM, we need to consider this. For \textit{REUCM}$(a, b)$, it is not equal to \textit{REUCM}$(b, a)$. It is because $\{a\}$ $\rightarrow$ $\{b\}$ is different from $\{b\}$ $\rightarrow$ $\{a\}$. Note that \textit{REUCM}$(a, b)$ $\ge$ \textit{SEU}$(\{a\}$ $\rightarrow$ $\{b\})$.

\begin{theorem}
	Given two items $a$ and $b$ ($a$ in front of $b$), and a sequential rule $r$ = $\{a\}$ $\rightarrow$ $\{b\}$, \textit{REUCM}$(a, b)$ $\ge$ \textit{SEU}$(r)$.
\end{theorem}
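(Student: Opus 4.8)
The plan is to prove the inequality by comparing the two sets of sequences over which the sums are taken, and then observing that each summand on the right-hand side also appears on the left-hand side. Recall from the definitions that $\textit{SEU}(r) = \sum_{\textit{seq}(r) \subseteq s_l \land s_l \in \mathcal{D}} \textit{SU}(s_l)$, where $r = \{a\} \rightarrow \{b\}$, so the summation runs over exactly those sequences in which $a$ appears in some itemset and $b$ appears in a \emph{strictly later} itemset (this is what $\textit{seq}(r) \subseteq s_l$ means for a rule). On the other hand, $\textit{REUCM}(a, b) = \sum_{(\textit{seq}(\{a\} \rightarrow \{b\}) \subseteq s \ \text{or}\ (a,b) \subseteq s) \land s \in \mathcal{D}} \textit{SU}(s)$, where the summation runs over all sequences in which either $b$ occurs in a later itemset than $a$, or $a$ and $b$ occur together in the same itemset with $a$ preceding $b$.

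First I would define the two index sets explicitly: let $\mathcal{S}_r = \{\, s_l \in \mathcal{D} : \textit{seq}(r) \subseteq s_l \,\}$ be the set of sequences contributing to $\textit{SEU}(r)$, and let $\mathcal{S}_{\textit{REUCM}} = \{\, s \in \mathcal{D} : \textit{seq}(\{a\}\rightarrow\{b\}) \subseteq s \ \text{or}\ (a,b) \subseteq s \,\}$ be the set contributing to $\textit{REUCM}(a,b)$. The key structural observation is that $\mathcal{S}_r \subseteq \mathcal{S}_{\textit{REUCM}}$: any sequence satisfying the rule-containment condition $\textit{seq}(r) \subseteq s_l$ trivially satisfies the first disjunct of the $\textit{REUCM}$ condition, so it belongs to $\mathcal{S}_{\textit{REUCM}}$ as well. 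The $\textit{REUCM}$ set is larger precisely because it additionally collects the sequences where $a$ and $b$ co-occur within a single itemset.

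Next I would combine the subset relation with the nonnegativity of the summands. Since both internal and external utilities are positive by the Sequence database definition, every sequence utility $\textit{SU}(s)$ is nonnegative (indeed strictly positive for any sequence containing items). Therefore adding the extra terms indexed by $\mathcal{S}_{\textit{REUCM}} \setminus \mathcal{S}_r$ can only increase the total, giving
\begin{equation}
\textit{REUCM}(a,b) = \sum_{s \in \mathcal{S}_{\textit{REUCM}}} \textit{SU}(s) \ge \sum_{s_l \in \mathcal{S}_r} \textit{SU}(s_l) = \textit{SEU}(r).
\end{equation}
This chain of reasoning completes the argument.

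I expect the main obstacle to be purely definitional rather than computational: one must be careful to interpret $\textit{seq}(r) \subseteq s_l$ correctly as requiring the \emph{ordered} containment (with $b$ after $a$), and to verify that this condition genuinely implies the disjunctive $\textit{REUCM}$ condition so that the subset inclusion holds without exception. A secondary subtlety worth flagging is ensuring that the two summations use the same weight $\textit{SU}(s)$ on shared sequences, which they do by construction, so no term is double-counted or mismatched; once the index-set inclusion and the positivity of $\textit{SU}$ are established, the inequality follows immediately.
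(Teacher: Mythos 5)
Your proof is correct and follows essentially the same route as the paper's: the paper splits $\textit{REUCM}(a,b)$ into the contribution $\textit{SEU}_1$ from sequences where $a$'s itemset strictly precedes $b$'s (which equals $\textit{SEU}(r)$) plus a nonnegative contribution $\textit{SEU}_2$ from sequences where $a$ and $b$ share an itemset, which is exactly your index-set inclusion $\mathcal{S}_r \subseteq \mathcal{S}_{\textit{REUCM}}$ combined with nonnegativity of $\textit{SU}(s)$. Your phrasing is, if anything, slightly cleaner, since it does not depend on the two cases being mutually exclusive.
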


\begin{proof}
	\rm We can know that item $a$ appears in front of item $b$. Then there will be two situations. The first one is that the itemset of item $a$ precedes the itemset of item $b$. The second one is that item $a$ and item $b$ appear in the same itemset. For the first case, its estimated utility can be expressed as \textit{SEU}$_1$ = \textit{SEU}$(\{a\}$ $\rightarrow$ $\{b\})$. For another case, the estimated utility of it can be expressed as \textit{SEU}$_2$ = \textit{SEU}(\textit{seq}$((a,b)))$ and \textit{SEU}$(r)$ = 0. Therefore, \textit{REUCM}$(a, b)$ = \textit{SEU}$_1$ + \textit{SEU}$_2$ $\ge$ \textit{SEU}$(r)$.
\end{proof}

\begin{strategy}[Rule estimated utility co-occurrence pruning strategy]
	\rm Inspired by EUCP \cite{fournier2014fhm}, US-Rule proposes the rule estimated utility co-occurrence pruning strategy (REUCP) to void meaningless operation in expansion. Relied on the novel data structure of REUCM, given a sequential rule $r$ = $\{X\}$ $\rightarrow$ $\{Y\}$, item $m$ and item $n$ respectively are the largest item of $X$ and $Y$ according to $\succ_{lex}$. For an item $i$, if \textit{REUCM}$(i, n)$ $\textless$ \textit{minutil}, item $i$ will not perform this left expansion; and if \textit{REUCM}$(m, i)$ $\textless$ \textit{minutil}, item $i$ will not perform this right expansion.
\end{strategy}

\begin{proof}
	\rm Given a sequential rule $r$ = $\{X\}$ $\rightarrow$ $\{Y\}$ and an item $a$ that can only be left extended. We assume that the sequential rule $r^\prime$ = $\{X\} \cup a$ $\rightarrow$ $\{Y\}$ is generated form $r$ by performing left expansion on $a$. Let $S_r$ be the sequences set of \textit{seq}$(r)$, and $S_{r^{\prime}}$ to denote the sequences set of \textit{seq}$(r^{\prime})$. Since \textit{seq}$(r)$ $\subset$ \textit{seq}$(r^{\prime})$, we can know that $S_{r^{\prime}}$ $\subseteq$ $S_r$. As discussed in Definition \ref{definition:SR}, let $n$ be the largest item in $Y$, and $a$ will be front of $n$. And let $t$ denote $\{a\} \rightarrow \{n\}$ and  $S_t$ denote the set of all sequences which contain \textit{seq}$(t)$. Because \textit{seq}$(t)$ $\subset$ \textit{seq}$(r)$ $\subset$ \textit{seq}$(r^{\prime})$,  $S_{r^{\prime}}$ $\subseteq$ $S_r$ $\subseteq$ $S_t$. And we can know that \textit{SEU}$(t)$ $\ge$ \textit{SEU}$(r)$ $\ge$ \textit{SEU}$(r^{\prime})$. If \textit{REUCM}$(a, n)$ $\textless$ \textit{minutil}, there will be \textit{SEU}($r^{\prime})$ $\le$ \textit{SEU}$(r)$ $\le$ \textit{SEU}$(t)$ $\le$ \textit{REUCM}$(a, n)$ $\textless$ \textit{minutil}. Therefore, $r^{\prime}$ is an unpromising rule, we can terminate this expansion. The proof about the right expansion is similar to this proof.
\end{proof}

\begin{upper bound}[Left expansion estimated utility]
	\rm The left expansion estimated utility (LEEU) of a sequential rule $r$ in sequence $s$, denoted as \textit{LEEU}$(r, s)$, is defined as:
	$$ \textit{LEEU}(r,s)=\left\{
	\begin{aligned}
	u(r,s) + \textit{ULeft}(r,s) + \textit{ULeftRight}(r,s) &  & \textit{ULeft}(r,s) + \textit{ULeftRight}(r,s) \textgreater 0 \\
	0 & & otherwise
	\end{aligned}
	\right.
	$$
\end{upper bound}

Here \textit{ULeft}$(r,s)$ denotes the total utility of all items in sequence $s$ which can only extend sequential rule $r$ by left expansion and \textit{ULeftRight}$(r,s)$ denotes the total utility of all items in sequence $s$ which both can extend sequential rule $r$ by left expansion and right expansion. The left expansion estimated utility of a sequential rule $r$ in sequence database $\mathcal{D}$, denoted as \textit{LEEU}$(r)$, is defined as:
\begin{center}
	\textit{LEEU}$(r)$ = $\sum_{\textit{seq}(r)\subseteq s_l \land s_l \in \mathcal{D}}$ \textit{LEEU}$(r, s_l)$
\end{center}

Consider the sequential rule $r$ = $\{a\}$ $\rightarrow$ $\{c\}$, $r$ appears in sequence $s_1$ and $s_2$. In sequence $s_1$, item $b$ can only extend $r$ by left expansion, \textit{ULeft}$(r, s_1)$ = $u(b)$ = 2, and \textit{LEEU}$(r, s_1)$ will be equal to $u(r)$ + \textit{ULeft}$(r,s_1)$ + \textit{ULeftRight}$(r,s_1)$ = 6 + 2 + 0 = 8. This is because there is no item can extend $r$ in sequence $r_2$, \textit{LEEU}$(r, s_2)$ = 0. Therefore, \textit{LEEU}$(r)$ = \textit{LEEU}$(r, s_1)$ + \textit{LEEU}$(r, s_2)$ = 8 + 0 = 8.

\begin{theorem}
	\label{Theorem:LEEU}
	Given a sequential rule $r$ = $\{X\}$ $\rightarrow$ $\{Y\}$ and its left expansion $r^\prime$ = $\{X\}$ $\cup$ $a$ $\rightarrow$ $\{Y\}$, $u(r^\prime)$ $\le$ \textit{LEEU}$(r)$. 
\end{theorem}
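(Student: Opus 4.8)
The plan is to prove the inequality sequence by sequence and then sum, mirroring the strategy used for the co-occurrence bound above. First I would record the structural facts about the left expansion that I will need. Since $r^\prime = \{X\} \cup a \rightarrow \{Y\}$ with $a$ strictly larger (under $\succ_{lex}$) than every item of $X$ and $a \notin Y$, the item $a$ is not already part of $r$, and any sequence $s_l$ with $\textit{seq}(r^\prime) \subseteq s_l$ necessarily satisfies $\textit{seq}(r) \subseteq s_l$. Hence the set of sequences over which $u(r^\prime)$ is summed is contained in the set over which $\textit{LEEU}(r)$ is summed.

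The core step is the per-sequence bound: for every $s_l$ with $\textit{seq}(r^\prime) \subseteq s_l$, I claim $u(r^\prime, s_l) \le \textit{LEEU}(r, s_l)$. Because each item occurs at most once per sequence, I can split $u(r^\prime, s_l) = u(r, s_l) + u(a, s_l)$, where $u(a, s_l) = q(a,s_l)\cdot eu(a) > 0$. Since $a$ is precisely the item used to left-extend $r$ in $s_l$, it belongs to the set of items that can extend $r$ by left expansion; by the definitions of $\textit{ULeft}$ and $\textit{ULeftRight}$ this set is partitioned into the left-only and the left-and-right items, so the utility of $a$ is one summand of $\textit{ULeft}(r,s_l) + \textit{ULeftRight}(r,s_l)$. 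As all utilities are positive, this already forces $\textit{ULeft}(r,s_l) + \textit{ULeftRight}(r,s_l) \ge u(a,s_l) > 0$, so the piecewise definition lands in its first branch and $u(a,s_l) \le \textit{ULeft}(r,s_l) + \textit{ULeftRight}(r,s_l)$. Combining gives $u(r^\prime,s_l) = u(r,s_l) + u(a,s_l) \le u(r,s_l) + \textit{ULeft}(r,s_l) + \textit{ULeftRight}(r,s_l) = \textit{LEEU}(r,s_l)$.

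Finally I would sum the per-sequence bound over all $s_l$ containing $r^\prime$ and use both the set inclusion noted above and the fact that $\textit{LEEU}(r,s_l) \ge 0$ for every sequence (it is either a sum of nonnegative utilities or $0$) to extend the sum to all sequences containing $r$:
\[
u(r^\prime) = \sum_{\textit{seq}(r^\prime)\subseteq s_l} u(r^\prime,s_l) \le \sum_{\textit{seq}(r^\prime)\subseteq s_l} \textit{LEEU}(r,s_l) \le \sum_{\textit{seq}(r)\subseteq s_l} \textit{LEEU}(r,s_l) = \textit{LEEU}(r).
\]

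The main obstacle I anticipate is the per-sequence step, specifically justifying rigorously that the expansion item $a$ is counted in $\textit{ULeft}(r,s_l) + \textit{ULeftRight}(r,s_l)$ and handling the piecewise branch cleanly. This hinges on matching the ``can left-extend'' condition from the expansion definition with the membership conditions in the definitions of $\textit{ULeft}$ and $\textit{ULeftRight}$, and the single-occurrence assumption on items is what keeps the decomposition $u(r^\prime,s_l) = u(r,s_l) + u(a,s_l)$ unambiguous. The final summation, together with the nonnegativity of the otherwise-branch contributions, is then routine.
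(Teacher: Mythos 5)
Your proof is correct and follows the same route as the paper's: decompose $u(r^\prime,s)=u(r,s)+u(a,s)$, bound $u(a,s)$ by $\textit{ULeft}(r,s)+\textit{ULeftRight}(r,s)$ since $a$ is a left-extension item, and sum over sequences. Your version is in fact more careful than the paper's, since you explicitly handle the piecewise branch of the \textit{LEEU} definition and the passage from the sum over sequences containing $r^\prime$ to the (larger) sum over sequences containing $r$, which the paper's one-line conclusion glosses over.
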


\begin{proof}
	\label{Proof:LEEU}
	\rm In sequence $s$, since the item $a$ can be an item of left expansion for $r$, then $u(a,s)$ $\le$ \textit{ULeft}$(r,s)$ + \textit{ULeftRight}$(r,s)$. And we can know that $u(r^\prime,s)$ = $u(r,s)$ + $u(a,s)$ $\le$ $u(r,s)$ + \textit{ULeft}$(r,s)$ + \textit{ULeftRight}$(r,s)$ = \textit{LEEU}$(r,s)$. Therefore, $u(r^\prime)$ $\le$ \textit{LEEU}$(r)$.
\end{proof}

\begin{upper bound}[Right expansion estimated utility]
	\rm The right expansion estimated utility (REEU) of a sequential rule $r$ in sequence $s$, denoted as \textit{REEU}$(r, s)$, is defined as:
	$$ \textit{REEU}(r,s)=\left\{
	\begin{aligned}
	u(r,s) + \textit{UtilityExtend}(r,s)  &  & \textit{UtilityExtend}(r,s)  \textgreater 0 \\
	0 & & otherwise
	\end{aligned}
	\right.
	$$
\end{upper bound}

We use \textit{URight}$(r,s)$ to denote the total utility of all items in sequence $s$ which can only extend sequential rule $r$ by right expansion and \textit{UtilityExtend}$(r,s)$ to denote the total utility of all items which can extend sequential rule $r$ by left expansion or right expansion. We can know that \textit{UtilityExtend}$(r,s)$ = \textit{ULeft}$(r,s)$ + \textit{ULeftRight}$(r,s)$ + \textit{URight}$(r,s)$. Different from the equation of \textit{LEEU}, \textit{UtilityExtend}$(r,s)$ is a larger value. This is because the right extension can be followed by the left extension, and the right extension needs to take into account that the left extension may make the utility of a sequential rule larger.  

The right expansion estimated utility of a sequential rule $r$ in database $\mathcal{D}$, denoted as \textit{REEU}$(r)$, is defined as:
\begin{center}
	\textit{REEU}$(r)$ = $\sum_{\textit{seq}(r)\subseteq s_l \land s_l \in \mathcal{D}}$ $\textit{REEU}(r, s_l)$
\end{center}

Consider the sequential rule $r$ = $\{a\}$ $\rightarrow$ $\{c\}$, where $r$ appears in $s_1$ and $s_2$. In sequence $s_1$, item $b$ and item $g$ can only extend $r$, because \textit{UtilityExtend}$(r, s_1)$ = $u(b)$ + $u(g)$ = 2 + 1 = 3. \textit{REEU}$(r, s_1)$ will be equal to $u(r)$ + \textit{UtilityExtend}$(r,s_1)$ = 6 + 3 = 9. The items $e$ and $g$ can extend $r$ in sequence $r_2$, thus \textit{REEU}$(r, s_2)$ = 8. Therefore, \textit{REEU}$(r)$ = \textit{REEU}$(r, s_1)$ + \textit{REEU}$(r, s_2)$ = 9 + 8 = 17.

\begin{theorem}
	\label{Theorem:REEU}
	Given a sequential rule $r$ = $\{X\}$ $\rightarrow$ $\{Y\}$ and its right expansion $r^\prime$ = $\{X\} $ $\rightarrow$ $\{Y\} \cup a$, $u(r^\prime)$ $\le$ \textit{REEU}$(r)$. 
\end{theorem}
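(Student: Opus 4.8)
The plan is to mirror the single-sequence argument already used for Theorem~\ref{Theorem:LEEU} and then lift it to the whole database by summation. First I would fix an arbitrary sequence $s$ that contains \textit{seq}$(r^\prime)$. Since a right expansion only enlarges the footprint of the rule, every such $s$ also contains \textit{seq}$(r)$, so \textit{REEU}$(r,s)$ is well defined and occurs as a term in the sum defining \textit{REEU}$(r)$.

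The key step is a per-sequence bound on the contribution of the newly appended item $a$. Because $a$ is admissible for the right expansion of $r$ in $s$, the classification of extendable items places it either among the items that can only extend $r$ to the right or among those that can extend $r$ in both directions; hence $u(a,s) \le \textit{URight}(r,s) + \textit{ULeftRight}(r,s)$. Since $\textit{UtilityExtend}(r,s) = \textit{ULeft}(r,s) + \textit{ULeftRight}(r,s) + \textit{URight}(r,s)$ and every summand is a nonnegative utility, this immediately yields $u(a,s) \le \textit{UtilityExtend}(r,s)$. It then follows that $u(r^\prime,s) = u(r,s) + u(a,s) \le u(r,s) + \textit{UtilityExtend}(r,s)$, and because the right-hand side is positive whenever $a$ contributes, it equals \textit{REEU}$(r,s)$ by the first case of the definition.

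Finally I would sum over sequences. Summing the inequality $u(r^\prime,s) \le \textit{REEU}(r,s)$ over all $s$ containing \textit{seq}$(r^\prime)$ gives $u(r^\prime) \le \sum_{\textit{seq}(r^\prime) \subseteq s_l \land s_l \in \mathcal{D}} \textit{REEU}(r,s_l)$. Since the set of sequences containing \textit{seq}$(r^\prime)$ is a subset of those containing \textit{seq}$(r)$, and each \textit{REEU}$(r,s_l)$ is nonnegative by definition, dropping the restriction can only increase the sum; therefore $\sum_{\textit{seq}(r^\prime) \subseteq s_l} \textit{REEU}(r,s_l) \le \sum_{\textit{seq}(r) \subseteq s_l} \textit{REEU}(r,s_l) = \textit{REEU}(r)$, which establishes $u(r^\prime) \le \textit{REEU}(r)$.

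I do not expect a genuine obstacle, as this is the right-expansion twin of Theorem~\ref{Theorem:LEEU}. The only point deserving care is justifying that \textit{UtilityExtend} legitimately absorbs the \textit{ULeft} term, even though a single right expansion cannot consume a left-only item. For this theorem the inclusion is harmless because it merely loosens the upper bound; more importantly, it is necessary so that the bound stays valid under the subsequent left expansions that a right-expanded rule is still permitted to undergo, which is why \emph{REEU} carries the larger \textit{UtilityExtend} quantity rather than the leaner right-only sum.
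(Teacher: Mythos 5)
Your proposal is correct and follows essentially the same route as the paper's proof: the per-sequence bound $u(a,s) \le \textit{UtilityExtend}(r,s)$ giving $u(r^\prime,s) \le \textit{REEU}(r,s)$, then summation over the database. You merely make explicit two details the paper leaves implicit --- why $a$'s utility is absorbed by \textit{UtilityExtend} (via the classification into right-only and both-ways extendable items) and why restricting the sum to sequences containing \textit{seq}$(r^\prime)$ is harmless given nonnegativity of \textit{REEU}$(r,s)$.
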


\begin{proof}
	\rm In a sequence $s$, since the item $a$ can be an item of right expansion for $r$, then $u(a,s)$ $\le$ \textit{UtilityExtend}$(r,s)$. And we can know that $u(r^\prime,s)$ = $u(r,s)$ + $u(a,s)$ $\le$ $u(r,s)$ + \textit{UtilityExtend}$(r,s)$ = \textit{REEU}$(r,s)$. Therefore, $u(r^\prime)$ $\le$ \textit{REEU}$(r)$.
\end{proof}

\begin{upper bound}[Left expansion reduced sequence utility]
	\rm Let $\alpha$ be a sequential rule able to generate sequential rule $r$ by left expansion. The left expansion reduced sequence utility (LERSU) of $r$ in sequence $s$, can be denoted as \textit{LERSU}$(r, s)$ and defined as:
	$$ \textit{LERSU}(r,s)=\left\{
	\begin{aligned}
	\textit{LEEU}(\alpha,s)   &  & \textit{seq}(r) \subseteq s \land \textit{seq}(\alpha) \subseteq s \\
	0 & & otherwise
	\end{aligned}
	\right.
	$$
\end{upper bound}

The left expansion reduced sequence utility of a sequential rule $r$ in database $\mathcal{D}$, denoted as \textit{LERSU}$(r)$, is defined as:
\begin{center}
	\textit{LERSU}$(r)$ = $\sum_{\forall s \in \mathcal{D}}$ \textit{LERSU}$(r, s)$
\end{center}

Consider the sequential rule $r$ = $\{a, c\}$ $\rightarrow$ $\{g\}$, $r$ appears in sequence $s_2$, and it generated from $\alpha$ = $\{a\}$ $\rightarrow$ $\{g\}$ by performing the left expansion. Therefore, \textit{LERSU}$(r)$ = \textit{LEEU}$(\alpha, s_2)$ = 8.

\begin{theorem}
	\label{Theorem:LERSU}
	Given a sequential rule $r$ = $\{X\}$ $\rightarrow$ $\{Y\}$ and its left expansion $r^\prime$ = $\{X\} \cup a$ $\rightarrow$ $\{Y\} $, $u(r^\prime)$ $\le$ \textit{LERSU}$(r)$. 
\end{theorem}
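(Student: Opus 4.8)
The plan is to reduce the claim to the single-sequence level and then exploit the non-negativity of \textit{LEEU} together with an inclusion of the relevant sequence sets. Write $r = \{X\} \rightarrow \{Y\}$, let $\alpha = \{X \setminus \{b\}\} \rightarrow \{Y\}$ be the parent rule that generates $r$ by the left expansion of item $b$, and let $r' = \{X \cup \{a\}\} \rightarrow \{Y\}$. By definition, \textit{LERSU}$(r) = \sum_{s}$ \textit{LEEU}$(\alpha, s)$ taken over the sequences $s$ with \textit{seq}$(r) \subseteq s$ and \textit{seq}$(\alpha) \subseteq s$, so it suffices to compare $u(r', s)$ with \textit{LEEU}$(\alpha, s)$ in each such sequence and then sum.

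First I would fix a sequence $s$ containing \textit{seq}$(r')$ and compute the utility gap between $r'$ and $\alpha$. Since $r'$ and $\alpha$ share the same consequent $Y$ and their antecedents differ only by the two items $a$ and $b$, we have $u(r', s) = u(\alpha, s) + u(a, s) + u(b, s)$. The key observation is that both $a$ and $b$ are legitimate left-expansion items for $\alpha$ in $s$: the item $b$ is precisely the one whose left expansion produced $r$ from $\alpha$, and $a$ left-expands $r$, which has the same consequent $Y$ as $\alpha$, so $a$ also occurs before $Y$ and may therefore left-expand $\alpha$. Each left-expansion item contributes its utility to either \textit{ULeft}$(\alpha, s)$ or \textit{ULeftRight}$(\alpha, s)$, and because $a \neq b$ their utilities are counted separately. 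Hence $u(a, s) + u(b, s) \le$ \textit{ULeft}$(\alpha, s) +$ \textit{ULeftRight}$(\alpha, s)$, which is strictly positive (utilities are positive), placing us in the first branch of the \textit{LEEU} definition and yielding $u(r', s) \le$ \textit{LEEU}$(\alpha, s)$.

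Finally I would sum over sequences using the inclusion $S_{r'} \subseteq S_r$, where $S_{r'}$ and $S_r$ denote the sets of sequences containing \textit{seq}$(r')$ and \textit{seq}$(r)$. Every $s \in S_{r'}$ also contains \textit{seq}$(r)$ and \textit{seq}$(\alpha)$, so \textit{LERSU}$(r, s) =$ \textit{LEEU}$(\alpha, s) \ge u(r', s)$, while for the remaining sequences of $S_r$ the term \textit{LERSU}$(r, s)$ is still non-negative. Therefore $u(r') = \sum_{s \in S_{r'}} u(r', s) \le \sum_{s \in \mathcal{D}}$ \textit{LERSU}$(r, s) =$ \textit{LERSU}$(r)$.

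The step I expect to be the main obstacle is justifying rigorously that $a$ qualifies as a left-expansion item of the parent rule $\alpha$ and not merely of $r$; this rests on the fact that left expansion never alters the consequent, so the positional requirement ``appears before $Y$'' transfers unchanged from $r$ back to $\alpha$. Once that is settled, the accounting of $u(a,s) + u(b,s)$ against \textit{ULeft}$+$\textit{ULeftRight} and the final set-inclusion summation are routine.
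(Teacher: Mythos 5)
Your proof is correct, and it takes a noticeably different route from the paper's. The paper's proof applies Theorem~\ref{Theorem:LEEU} to $r$ itself to get $u(r',s) \le \textit{LEEU}(r,s)$ and then asserts the chain $\textit{LEEU}(r,s) \le \textit{LEEU}(\alpha,s) = \textit{LERSU}(r,s)$; the middle inequality (anti-monotonicity of \textit{LEEU} under left expansion) is stated but never justified. You instead bypass $\textit{LEEU}(r,s)$ entirely and bound $u(r',s)$ against $\textit{LEEU}(\alpha,s)$ directly, via the decomposition $u(r',s) = u(\alpha,s) + u(a,s) + u(b,s)$ and the observation that both $a$ and $b$ are legitimate left-expansion items of $\alpha$ (with $a \neq b$ since $a \notin X$ while $b \in X$, so the two utilities are counted disjointly inside $\textit{ULeft}(\alpha,s) + \textit{ULeftRight}(\alpha,s)$). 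What your approach buys is self-containedness: it fills the gap the paper leaves open, and it also handles the branch condition of the \textit{LEEU} definition explicitly (positivity of $u(a,s)+u(b,s)$ forces the first branch). What the paper's approach buys is brevity and reuse of Theorem~\ref{Theorem:LEEU} as a black box, at the cost of the unproved intermediate step. Your final summation over $S_{r'} \subseteq S_r$ and the non-negativity of the remaining $\textit{LERSU}(r,s)$ terms matches the intended reading of the definition, and your worry about transferring the left-expandability of $a$ from $r$ back to $\alpha$ is resolved exactly as you suggest: the consequent is unchanged and the lexicographic condition only weakens when $b$ is removed from the antecedent.
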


\begin{proof}
	\rm Let $\alpha$ be a sequential rule able to generate sequential rule $r$ by left expansion. In a sequence $s$, \textit{LERSU}$(r,s)$ = \textit{LEEU}$(\alpha,s)$. Based on the Theorem \ref{Theorem:LEEU}, we can know that $u(r^\prime, s)$ $\le$ \textit{LEEU}$(r,s)$ $\le$ \textit{LEEU}$(\alpha,s)$ = \textit{LERSU}$(r,s)$. Therefore, we have $u(r^\prime)$ $\le$ \textit{LERSU}$(r)$.
\end{proof}

\begin{upper bound}[Right expansion reduced sequence utility]
	\rm Let $\alpha$ be a sequential rule able to generate sequential rule $r$ by right expansion. The right expansion reduced sequence utility (RERSU) of $r$ in sequence $s$, denoted as \textit{RERSU}$(r, s)$, is defined as:
	$$ \textit{RERSU}(r,s)=\left\{
	\begin{aligned}
	\textit{REEU}(\alpha,s)   &  & \textit{seq}(r) \subseteq s \land \textit{seq}(r^\prime) \subseteq s \\
	0 & & otherwise
	\end{aligned}
	\right.
	$$
\end{upper bound}

The right expansion reduced sequence utility of a sequential rule $r$ in database $\mathcal{D}$, denoted as \textit{RERSU}$(r)$, is defined as:
\begin{center}
	\textit{RERSU}$(r)$ = $\sum_{\forall s \in \mathcal{D}}$ \textit{RERSU}$(r, s)$
\end{center}

Consider the sequential rule $r$ = $\{a\}$ $\rightarrow$ $\{c, g\}$, $r$ appears in sequence $s_1$ and $s_2$, and it generated from $\alpha$ = $\{a\}$ $\rightarrow$ $\{c\}$ by performing the right expansion. Therefore, \textit{RERSU}$(r)$ = \textit{REEU}$(\alpha, s_1)$ + \textit{REEU}$(\alpha, s_2)$ = 9 + 8 = 17.

\begin{theorem}
	\label{Theorem:RERSU}
	Given a sequential rule $r$ = $\{X\}$ $\rightarrow$ $\{Y\}$ and its right expansion $r^\prime$ = $\{X\} $ $\rightarrow$ $\{Y\} \cup a $, $u(r^\prime)$ $\le$ \textit{RERSU}$(r)$. 
\end{theorem}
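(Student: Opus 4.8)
The plan is to follow the template of the proof of Theorem \ref{Theorem:LERSU} (the LERSU case), reducing the database-level claim to a per-sequence inequality and then summing. Let $\alpha$ be a sequential rule able to generate $r$ by right expansion, so that $r$ is obtained from $\alpha$ by appending one item to its consequent; by the definition of RERSU we have $\textit{RERSU}(r,s) = \textit{REEU}(\alpha,s)$ on every sequence $s$ satisfying the guard condition and $\textit{RERSU}(r,s) = 0$ otherwise. Since $u(r^\prime) = \sum_{\textit{seq}(r^\prime)\subseteq s_l \land s_l\in\mathcal{D}} u(r^\prime,s_l)$ and every sequence containing $\textit{seq}(r^\prime)$ also contains $\textit{seq}(r)$ and $\textit{seq}(\alpha)$, it suffices to establish $u(r^\prime,s) \le \textit{RERSU}(r,s)$ for each such sequence and then add the results using the definition of $\textit{RERSU}(r)$.

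First I would fix a qualifying sequence $s$ and build the chain $u(r^\prime,s) \le \textit{REEU}(r,s) \le \textit{REEU}(\alpha,s) = \textit{RERSU}(r,s)$. The left inequality is exactly Theorem \ref{Theorem:REEU}, applied to $r$ and its right expansion $r^\prime$. The right-hand equality is the definitional identity for RERSU noted above. The middle inequality, $\textit{REEU}(r,s) \le \textit{REEU}(\alpha,s)$, is the monotonicity of the REEU upper bound under a single right expansion, and it plays the same role here that $\textit{LEEU}(r,s) \le \textit{LEEU}(\alpha,s)$ plays in the LERSU proof; summing the chain over all $s \in \mathcal{D}$ then yields $u(r^\prime) \le \textit{RERSU}(r)$.

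The hard part will be justifying the monotonicity step $\textit{REEU}(r,s) \le \textit{REEU}(\alpha,s)$. Writing $r$ as the right expansion of $\alpha$ by some item $a$, we have $u(r,s) = u(\alpha,s) + u(a,s)$, so the claim reduces to $u(a,s) + \textit{UtilityExtend}(r,s) \le \textit{UtilityExtend}(\alpha,s)$. I would argue this by inclusion of candidate extension items: the item $a$ is counted in $\textit{UtilityExtend}(\alpha,s)$ since it right-extends $\alpha$, while every item that can still extend $r$ by a left or right expansion is also a left or right extension candidate of $\alpha$ and is distinct from $a$ (because $a$ now lies in the consequent of $r$ and can no longer act as an extension item). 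Thus the items contributing to $\textit{UtilityExtend}(r,s)$ together with $a$ form a subset of those contributing to $\textit{UtilityExtend}(\alpha,s)$, and positivity of all utilities gives the inequality. A final check handles the piecewise guards in the definition of REEU: whenever the right expansion $r^\prime$ actually occurs in $s$, the item realizing it makes $\textit{UtilityExtend}(r,s) > 0$, so the nonzero branch applies and no degenerate case can violate the bound.
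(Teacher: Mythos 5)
Your proposal follows essentially the same route as the paper's proof: fix a qualifying sequence $s$, establish the chain $u(r^\prime,s) \le \textit{REEU}(r,s) \le \textit{REEU}(\alpha,s) = \textit{RERSU}(r,s)$ using Theorem \ref{Theorem:REEU} and the definition of RERSU, and sum over $\mathcal{D}$. The one substantive difference is that you actually justify the middle monotonicity step $\textit{REEU}(r,s) \le \textit{REEU}(\alpha,s)$ (via the inclusion of the extension-candidate items of $r$, together with $a$ itself, among those of $\alpha$, plus positivity of utilities), whereas the paper asserts that inequality without argument; your justification is sound and fills a real gap in the published proof.
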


\begin{proof}
	\rm Let $\alpha$ be a sequential rule able to generate sequential rule $r$ by the right expansion. In a sequence $s$, \textit{RERSU}$(r,s)$ = \textit{REEU}$(\alpha,s)$. Based on the Theorem \ref{Theorem:REEU}, we can know that $u(r^\prime, s)$ $\le$ \textit{REEU}$(r,s)$ $\le$ \textit{REEU}$(\alpha,s)$ = \textit{RERSU}$(r,s)$. Therefore, we have $u(r^\prime)$ $\le$ \textit{RERSU}$(r)$.
\end{proof}

\begin{strategy}[Left expansion estimated utility pruning strategy]
	\label{Strategy:LEEU}
	\rm According to Theorem \ref{Theorem:LEEU}, in the left expansion of a sequential rule $r$, US-Rule proposes the left expansion estimated utility pruning strategy (LEEUP) to safety prune $r$ which satisfies \textit{LEEU}$(r)$ $\textless$ \textit{minutil}. We can know that $u(r^\prime)$ $\le$ \textit{LEEU}$(r)$ $\textless$ \textit{minutil}, the expanded  sequential rule $r^\prime$ is an unpromising rule.
\end{strategy}

\begin{strategy}[Right expansion estimated utility pruning strategy]
	\label{Strategy:REEU}
	\rm According to Theorem \ref{Theorem:REEU}, in the right expansion of a sequential rule $r$, US-Rule proposes the right expansion estimated utility pruning strategy (REEUP) to safety prune $r$ which satisfies \textit{REEU}$(r)$ $\textless$ \textit{minutil}. We can know that $u(r^\prime)$ $\le$ \textit{REEU}$(r)$ $\textless$ \textit{minutil}, the expanded  sequential rule $r^\prime$ is an unpromising rule.
\end{strategy}

\begin{strategy}[Left expansion reduced sequence utility pruning strategy]
	\label{Strategy:LERSU}
	\rm According to Theorem \ref{Theorem:LERSU}, in the left expansion of a sequential rule $r$, US-Rule proposes left expansion reduced sequence utility pruning strategy (LERSUP) to safety prune $r$ which satisfies \textit{LERSU}$(r)$ $\textless$ \textit{minutil}. We can know that $u(r^\prime)$ $\le$ \textit{LERSU}$(r)$ $\textless$ \textit{minutil}, the expanded  sequential rule $r^\prime$ is an unpromising rule.
\end{strategy}

\begin{strategy}[Right expansion reduced sequence utility pruning strategy]
	\label{Strategy:RERSU}
	\rm According to Theorem \ref{Theorem:RERSU}, in the left expansion of a sequential rule $r$, US-Rule proposes left expansion reduced sequence utility pruning strategy (RERSUP) to safety prune $r$ which satisfies \textit{RERSU}$(r)$ $\textless$ \textit{minutil}. We can know that $u(r^\prime)$ $\le$ \textit{RERSU}$(r)$ $\textless$ \textit{minutil}, the expanded  sequential rule $r^\prime$ is an unpromising rule.
\end{strategy}

\begin{strategy}[Rule estimated utility recomputing pruning strategy]
	\rm Strategy \ref{Strategy:UIP} can remove some unpromising items to avoid low-utility sequential rules are generated. However, in some sparse databases, there are also many items may not satisfy the condition \textit{SEU}$(i)$ $\le$ \textit{minutil} when removing unpromising items once. Because when removing unpromising items once, the SEU of some items may become smaller. To address this issue, US-Rule proposes rule estimated utility recomputing pruning strategy (REURP) to recalculate the SEU of all items multiple times and remove unpromising items.
\end{strategy}

\subsection{Data structures for novel upper bounds}

For efficiency and less memory consumption, we design the corresponding data structures for the introduced upper bounds (LEEU, REEU, LERSU, and RERSU). The details are described below.

\begin{definition}[RE-table of a sequential rule]
	\rm The RE-table is designed for the expansion of sequential rule $r$. In a sequence database $\mathcal{D}$, its format can presented as a tuple, like (\textit{SID}, \textit{Utility}, \textit{LEEU}, \textit{REEU}, \textit{RE-elements}). \textit{SID} is the unique identifier of sequence which contains $r$, and can also be defined as $\forall \textit{SID} \in \textit{seq}(r)$. \textit{Utility} is the total utility of this sequential rule $r$. \textit{LEEU} and \textit{REEU} are corresponding upper bound value of $r$. By storing both \textit{LEEU} and \textit{REEU} values, US-Rule can quickly utilize Strategy \ref{Strategy:LEEU} and Strategy \ref{Strategy:REEU}, thus avoiding the associated computation. As for \textit{RE-elements}, it is a list of \textit{RE-elements}. \textit{RE-elements} records the related information of a sequence containing $r$. We can see that \textit{RE-elements} is presented as (\textit{SID}, \textit{Utility}, \textit{ULeft}, \textit{URight}, \textit{ULeftRight}, \textit{LEEU}, \textit{REEU}, \textit{Position}). In \textit{RE-elements}, the first seven variables have been mentioned in the previous content. \textit{Position} is used to store the position of related itemset for efficiency. 
\end{definition}

\begin{definition}[LE-table of a sequential rule]
	\rm Since we specify that the right expansion cannot be performed after left expansion, then \textit{URight} and \textit{REEU} will carry little impact. Therefore, we designed \textit{LE-table} for better memory saving. \textit{LE-table} can presented as a tuple like (\textit{SID}, \textit{Utility}, \textit{LEEU}, \textit{LE-elements}). US-Rule just needs to use \textit{LEEU} to leverage the Strategy \ref{Strategy:LEEU}. \textit{LE-elements} can denoted as (\textit{SID}, \textit{Utility}, \textit{ULeft}, \textit{LEEU}), US-Rule has simplified it equally. Note that \textit{ULeft} is a little different from the above illustration. It is the sum of previous \textit{ULeft} and previous \textit{ULeftRight}. US-Rule uses one variable to optimize memory consumption.
\end{definition}

Note that RE-table and LE-table are modified from HUSRM \cite{zida2015efficient}.

\begin{definition}[RSU-table of a sequential rule]
	\label{RSU-table}
	\rm Unlike \textit{LEEU} and \textit{REEU}, US-Rule uses a \textit{RUS-table} to store and update \textit{LERSU} and \textit{RERSU} in left expansion or right expansion. The \textit{RUS-table} is a hash-table and US-Rule will keep updating it in the expansion. For each expanded item, US-Rule updates the \textit{RUS-table} once every time it scans a sequence containing it. If the sum of the value of the expanded item in the RUS-table and the remaining \textit{LERSU} or remaining \textit{RERSU} is less than \textit{minutil} during the update process, US-Rule stops the operation associated with the expanded item by using Strategy \ref{Strategy:LERSU} and Strategy \ref{Strategy:RERSU}.
\end{definition}

\subsection{Proposed US-Rule algorithm}

Based on the aforementioned discussions, we propose the US-Rule algorithm in this section. US-Rule first scans the sequence database to generate all sequential rules of $1$ $\ast$ $1$. And then, it uses a depth-first search to expand all $1$ $\ast$ $1$ rules. As discussed in Definition \ref{ESR}, to ensure that no unnecessary sequential rule is generated, a sequential rule cannot perform the right expansion after it has performed the left expansion. There are two procedures in US-Rule, one is the \textit{leftExpansion} procedure, and another is the  \textit{rightExpansion} procedure. The pseudocode of US-Rule is shown in Algorithm \ref{alg:USRule}.

\begin{algorithm}[htbp]
	\caption{The US-Rule algorithm}
	\label{alg:USRule}
	\KwIn{$\mathcal{D}$: a sequence database, \textit{minconf}: the minimum confidence threshold defined by users, \textit{minutil}: the minimum utility threshold defined by users.}
	\KwOut{a set of all high-utility sequential rules.}
	
	initialize $I$ to get all items occurred in $\mathcal{D}$;
	
	\While {$\exists$ $i$ $\in$ $I$ and $\textit{SEU}(i)$ $\textless$ \textit{minutil}} {
		remove all unpromising items from $I$; \\
		update the SEU of all items in $I$; \qquad(\textbf{REURP Strategy} );
	}
	
	scan $\mathcal{D}$ to build \textit{REUCM} and $R$ (a set of rules of the form $r$ $:$ $a$ $\rightarrow$ $b$ $(a,b \in I))$; \\
	calculate the SEU and \textit{seq}$(r)$ of $r$; \\
	remove unpromising sequential rules from $R$;  \qquad(\textbf{USRP Strategy}); 
	
	\For {$r$ $\in$ $R$}{
		calculate $u(r)$ by scanning \textit{seq}$(r)$; \\
		\If {$u(r)$ $\ge$ \textit{minutil} and \textit{conf}$(r)$ $\ge$ \textit{minconf}}{
			output $r$;
		}
		\If {\textit{REEU}$(r)$ $\ge$ \textit{minutil}}{
			call \textbf{rightExpansion}$($$r$, \textit{REUCM}, $\mathcal{D}$, \textit{minutil}, \textit{minconf}$)$; \qquad(\textbf{REEUP Strategy});
		}
		\If {\textit{LEEU}$(r)$ $\ge$ \textit{minutil}}{
			call \textbf{leftExpansion}$($$r$, \textit{REUCM}, $\mathcal{D}$, \textit{minutil}, \textit{minconf}$)$; \qquad(\textbf{LEEUP Strategy}); 
		}
	}	
	
\end{algorithm}

\begin{algorithm}[htbp]
	\caption{The rightExpansion procedure}
	\label{alg:rightExpansion}
	
	\KwIn{$r$: a sequential rule $\{X\}$ $\rightarrow$ $\{Y\}$, \textit{REUCM}: the rule estimated utility con-occurrence map, $\mathcal{D}$: a sequence database, \textit{minconf}: the minimum confidence threshold defined by users, \textit{minutil}: the minimum utility threshold defined by users.}
	\KwOut{a set of all high-utility sequential rules.}
	
	initialize \textit{RSU-table} $\leftarrow$ $\emptyset$, \textit{RE-table} $\leftarrow$ $\emptyset$, \textit{rules} $\leftarrow$ $\emptyset$;
	
	\For {$r$ $\in$ \textit{R} } {
		initialize \textit{RRSU} $\leftarrow$ \textit{RERSU}$(r)$, $N$ $\leftarrow$ the largest item in the antecedent of $r$; \\
		\For {sequence $s$ whose \textit{sid} $\in$ \textit{seq}$(r)$} {
			initialize \textit{RRSU} $\leftarrow$ $\emptyset$; \\
			\For {$i$ $\in$ $s$ which can expanded by right expansion} {
				\If {\textit{REUCM}$(N,i)$ $\textless$ \textit{minutil}} {
					continue;  \qquad(\textbf{REUCP Strategy}); 
				}
				$t$ $\leftarrow$ $($$\{X\}$ $\rightarrow$ $\{Y\} \cup i$$)$; \\
				update \textit{RSU-table}$(i)$; \\
				\If {\textit{RRSU} + \textit{RSU-table}$(i)$ $\textless$ \textit{minutil}} {
					\If {$t$ $\in$ \textit{rules}} {
						remove $t$ from \textit{rules}; \\
						continue;   \qquad(\textbf{RERSUP Strategy}); 
					}
				}
				update \textit{rules}, \textit{RE-table} of $t$;
			}
		}
		
	}
	
	\For {sequential rule $t$ $\in$ \textit{rules}} {
		calculate $u(t)$ by scanning the \textit{RE-table} of $t$; \\
		\If {$u(t)$ $\ge$ \textit{minutil} and \textit{conf}$(t)$ $\ge$ \textit{minconf}} {
			output $t$;
			
		}
		\If {\textit{REEU}$(t)$ $\ge$ \textit{minutil}} {
			call \textbf{rightExpansion}$($$t$, \textit{REUCM}, $\mathcal{D}$, \textit{minutil}, \textit{minconf}$)$; \qquad(\textbf{REEUP Strategy}); 
		}
		\If {\textit{LEEU}$(t)$ $\ge$ \textit{minutil}} {
			call \textbf{leftExpansion}$($$t$, \textit{REUCM}, $\mathcal{D}$, \textit{minutil}, \textit{minconf}$)$; \qquad(\textbf{LEEUP Strategy});
		}
	}
	
\end{algorithm}

US-Rule first scans the database to access all items which appear in this database and calculates the SEU of these items (Line 1). According to the REUCP, US-Rule continuously removes unpromising items from the database and updates the SEU of the items that have not been removed (Lines 2-5). US-Rule then scans the database again, generates the set of all $1$ $\ast$ $1$ sequential rules, and calculates their corresponding SEU values (Lines 6-7). According to the USRP, US-Rule removes unpromising rules from $R$ (the set of all $1$ $\ast$ $1$ sequential rules) (Line 8). Subsequently, US-Rule traverses all sequential rules $r$ in $R$. And then, after calculating its utility and confidence, if both its \textit{minutil} and \textit{minconf} are satisfied, sequential rule $r$ is a high-utility sequential rule and US-Rule will output it (Lines 10-13). Immediately afterwards, according to LEEUP and REEUP, US-Rule determines whether the REEU and LEEU of $r$ are greater than \textit{minutil}. If this condition is satisfied, the corresponding left extension and right extension are recursively executed (Lines 14-19). Obviously, in the main program of the US-Rule algorithm, we can see that for the use of the pruning strategies for the new upper bounds, US-Rule uses only the depth pruning strategy (REEUP and LEEUP). And in the two expansions, the width pruning strategy and the depth pruning strategy will be used. Note that REUCP is also used in these two expansions. The pseudocode for the \textit{rightExpansion} and the \textit{leftExpansion} is shown in Algorithm \ref{alg:rightExpansion} and Algorithm \ref{alg:leftExpansion} respectively.

In \textit{rightExpansion}, US-Rule first initializes the \textit{RSU-table}, \textit{RE-table} and \textit{rules} to the empty set (Line 1). As discussed in Definition \ref{RSU-table}, the \textit{RSU-table} is used to utilize RERSUP and LERSUP. Then the \textit{RE-table} and \textit{rules} are used to store information related to the expanded sequential rules. Subsequently, US-Rule scans the sequences containing the rules in $R$. The RRSU of $r$ and the largest item $N$ in the antecedent of $r$ are obtained (Lines 2-5). If the expanded item $i$ does not satisfy $\textit{REUCM}(N, i)$ greater than \textit{minutil}, according to REUCP, US-Rule can stop the expansion of $r$ with $i$ (Lines 7-9). If this condition is satisfied, US-Rule lets $t$ be the expanded sequential rule and updates the \textit{RSU-table} (Lines 10-11). And then, US-Rule determines whether $\textit{RSU}(t)$ satisfies \textit{minutil}, if not, US-Rule can use RERSUP to stop the operation associated with it and remove $t$ from \textit{rules} (Lines 12-17). Otherwise, US-Rule updates \textit{rules} and the corresponding \textit{RE-table} of $t$ (Line 18). Subsequently, US-Rule repeats the session similar to the main program. Calculating the utility value of $t$ by scanning its \textit{RE-table} (Line 23). If it is a high-utility sequential rule, US-Rule outputs it (Lines 24-26). Then the REEU and LEEU of $t$ are compared to \textit{minutil} and US-Rule determines whether to continue expansion based on LEEUP and REEUP (Lines 27-32).

The procedure of \textit{leftExpansion} is similar to the \textit{rightExpansion}. There is a little different from the \textit{rightExpansion}. We can consider that the procedure of the \textit{leftExpansion} is the \textit{rightExpansion} without consequent growth. Therefore, we do not discuss the more details about it.

\begin{algorithm}[htbp]
	\caption{The leftExpansion procedure}
	\label{alg:leftExpansion}
	
	\KwIn{$r$: a sequential rule $\{X\}$ $\rightarrow$ $\{Y\}$, \textit{REUCM}: the rule estimated utility con-occurrence map, $\mathcal{D}$: a sequence database, \textit{minconf}: the minimum confidence threshold defined by users, \textit{minutil}: the minimum utility threshold defined by users.}
	\KwOut{a set of all high-utility sequential rules.}
	
	initialize \textit{RSU-table} $\leftarrow$ $\emptyset$, \textit{LE-table} $\leftarrow$ $\emptyset$, \textit{rules} $\leftarrow$ $\emptyset$;
	
	\For {$r$ $\in$ \textit{R} } {
		initialize \textit{RRSU} $\leftarrow$ \textit{LERSU}$(r)$, $M$ $\leftarrow$ the largest item in the consequent of $r$; \\
		\For {sequence $s$ whose \textit{sid} $\in$ \textit{seq}$(r)$} {
			initialize \textit{RRSU} $\leftarrow$ $\emptyset$; \\
			\For {$i$ $\in$ $s$ which can expanded by left expansion} {
				\If {\textit{REUCM}$(i,M)$ $\textless$ \textit{minutil}} {
					continue;  \qquad(\textbf{REUCP Strategy}); 
				}
				$t$ $\leftarrow$ $($$\{X\} \cup i$ $\rightarrow$ $\{Y\} $$)$; \\
				update \textit{RSU-table}$(i)$; \\
				\If {\textit{RRSU} + \textit{RSU-table}$(i)$ $\textless$ \textit{minutil}} {
					\If {$t$ $\in$ \textit{rules}} {
						remove $t$ from \textit{rules}; \\
						continue;   \qquad(\textbf{LERSUP Strategy}); 
					}
				}
				update \textit{rules}, \textit{LE-table} of $t$;
			}
		}
		
	}
	
	\For {sequential rule $t$ $\in$ \textit{rules}} {
		calculate $u(t)$ by scanning the \textit{LE-table} of $t$; \\
		\If {$u(t)$ $\ge$ \textit{minutil} and \textit{conf}$(t)$ $\ge$ \textit{minconf}} {
			output $t$;
			
		}
		\If {\textit{LEEU}$(t)$ $\ge$ \textit{minutil}} {
			call \textbf{leftExpansion}$($$t$, \textit{REUCM}, $\mathcal{D}$, \textit{minutil}, \textit{minconf}$)$; \qquad(\textbf{LEEUP Strategy});
		}
	}
	
\end{algorithm}

\subsection{Example of US-Rule}

In this section, we introduce a simple running example of US-Rule. The sequence database is the same as shown in Table \ref{table1}, and the external utilities are shown in Table \ref{table2}. \textit{minutil} and \textit{minconf} are set to $10$ and $0.5$ respectively. At first, US-Rule scans the database to find out all items appearing in the database and calculate their SEU values, as shown in Table \ref{SEUTable0}. It is clear that $f$ is an unpromising item,  because \textit{SEU}$(f)$ = 5 $\textless$ \textit{minutil}. US-Rule will remove $f$ from the database. And then, the SEU of some items may be changed. The SEU values of all items after one removal are shown in Table \ref{SEUTable1}. We can find that $e$ is not satisfied the \textit{minutil} after its SEU was updated. It should be removed from the database. The REURP is helpful to eliminate low-utility items. After multiple removal, the final database is shown in Table \ref{FinalDT} and the SEU table is shown in Table \ref{SEUTable2}. After that, US-Rule uses these items to generate all 1 $\ast$ 1 sequential rules, as shown in Table \ref{ALL 1SR}. The REUCM is constructed simultaneously. We then know that \textit{REUCP}$(a,c)$ = 16, \textit{REUCP}$(a,g)$ = 16, \textit{REUCP}$(b,d)$ = 10, \textit{REUCP}$(b,g)$ = 19, \textit{REUCP}$(c,g)$ = 17, and \textit{REUCP}$(d,g)$ = 10. According to USRP, total three sequential rules will be eliminated. Let us discuss $r$ = $a$ $\rightarrow$ $c$ firstly. The LEEU and REEU of $r$ are equal to 16, thus US-Rule can continue right expansion and left expansion. In the right expansion of $r$, only item $g$ can be expanded to a new sequential rue $t$. \textit{REUCM}$(a,g)$ and \textit{RERSU}$(t)$ satisfy \textit{minutil}. Therefore, it will obtain $t$. $t$ = $\{a\}$ $\rightarrow$ $\{c,g\}$, its confidence is 1.0 and its utility is 14. We can consider $t$ as a HUSR. In the left expansion of $r$, only item $b$ can be expanded to a new sequential rue $t$. However, \textit{REUCM}$(b,c)$ = 9 $\textless$ \textit{minutil}. According to REUCP, $\{a,b\}$ $\rightarrow$ $\{c\}$ is not a HUSR. The process of other 1 $\ast$ 1 sequential rules are similar. The final results of HUSRs are shown in Table \ref{table3}.

\begin{minipage}{\textwidth}
	\begin{minipage}[t]{0.45\textwidth}
		\centering
		\makeatletter\def\@captype{table}\makeatother
		\begin{minipage}[t]{\textwidth}
			\centering
			\makeatletter\def\@captype{table}\makeatother\setlength{\abovecaptionskip}{5pt}\setlength{\belowcaptionskip}{5pt}\caption{SEU table of all items}
			\label{SEUTable0}
			\begin{tabular}{|c|c|c|c|c|c|c|c|}
				\hline
				\textbf{Item}	    & $a$	& $b$	& $c$	& $d$	& $e$	& $f$  & $g$ \\ \hline 
				\textbf{SEU}	    & 17     & 19     & 17     & 10     & 13     & 5    & 27 \\ \hline			
			\end{tabular}
		\end{minipage}
		\begin{minipage}[t]{\textwidth}
			\centering
			\makeatletter\def\@captype{table}\makeatother\setlength{\abovecaptionskip}{5pt}\setlength{\belowcaptionskip}{10pt}\caption{SEU table of all items after one removal}     
			\label{SEUTable1}  
			\begin{tabular}{|c|c|c|c|c|c|c|}
				\hline
				\textbf{Item}	    & $a$	& $b$	& $c$	& $d$	& $e$  & $g$ \\ \hline 
				\textbf{SEU}	    & 17     & 19     & 17     & 10     & 9   & 27 \\ \hline
			\end{tabular}
		\end{minipage}
		\begin{minipage}[t]{\textwidth}
			\centering
			\makeatletter\def\@captype{table}\makeatother\setlength{\abovecaptionskip}{5pt}\setlength{\belowcaptionskip}{10pt}\caption{SEU table of all items after multiple removal}
			\label{SEUTable2}
			\begin{tabular}{|c|c|c|c|c|c|c|}
				\hline
				\textbf{Item}	    & $a$	& $b$	& $c$	& $d$  & $g$ \\ \hline 
				\textbf{SEU}	    & 16     & 19     & 16     & 10     & 26 \\ \hline
			\end{tabular}
		\end{minipage}
	\end{minipage}
	\begin{minipage}[t]{0.45\textwidth}
		\begin{minipage}[t]{\textwidth}
			\centering
			\makeatletter\def\@captype{table}\makeatother\setlength{\abovecaptionskip}{5pt}\setlength{\belowcaptionskip}{5pt}\caption{Sequence database after multiple removal}
			\label{FinalDT}
			\begin{tabular}{|c|c|}  
				\hline 
				\textbf{SID} & \textbf{Sequence} \\
				\hline  
				\(s_{1}\) & $<$$(a, 1)$ $(b, 2)$ $\{(c, 1) (g,1)\}$$>$ \\ 
				\hline
				\(s_{2}\) & $<$$(a, 1)(c, 1)$ $(g, 1)$$>$ \\  
				\hline  
				\(s_{3}\) & $<$$(b, 1)$ $(d, 1)$ $(g, 2)$$>$ \\
				\hline  
			\end{tabular}
		\end{minipage}
	\end{minipage}
\end{minipage}

\begin{table}[!htbp]
	\centering
	\caption{All $1$ $\ast$ $1$ sequential rules found by US-Rule}
	\label{ALL 1SR}
	\begin{tabular}{|c|c|c|c|c|}  
		\hline 
		\textbf{ID} & \textbf{Sequential rule}  & \textbf{Confidence} & \textbf{Utility} & \textbf{Eliminated} \\
		\hline 
		\(r_{1}\) & $\{a\}$ $ \rightarrow $ $\{b\}$ & 0.5 & 9 & Yes\\ 
		\hline 
		\(r_{2}\) & $\{a\}$ $ \rightarrow $ $\{c\}$ & 1.0 & 16 & Not\\ 
		\hline 
		\(r_{3}\) & $\{a\}$ $ \rightarrow $ $\{g\}$ & 1.0 & 16 & Not\\ 
		\hline
		\(r_{4}\) & $\{b\}$ $ \rightarrow $ $\{c\}$ & 0.5 & 9 & Yes\\ 
		\hline  
		\(r_{5}\) & $\{b\}$ $ \rightarrow $ $\{d\}$ & 1.0 & 10 & Not\\ 
		\hline
		\(r_{6}\) & $\{b\}$ $ \rightarrow $ $\{g\}$ & 1.0 & 19 & Not\\ 
		\hline
		\(r_{7}\) & $\{c\}$ $ \rightarrow $ $\{g\}$ & 0.5 & 7 & Yes\\ 
		\hline
		\(r_{8}\) & $\{d\}$ $ \rightarrow $ $\{g\}$ & 1.0 & 10 & Not\\ 
		\hline
	\end{tabular}
\end{table}

\section{Experiments} 
\label{sec:experiments}

In this section, in order to evaluate the performance of US-Rule algorithm, we perform experiments on several real-life datasets and synthetic datasets. Considering that different strategies have different effects on different datasets, we designed four versions of the US-Rule algorithm to compare the state-of-the-art HUSRM algorithm \cite{zida2015efficient}. We use US-Rule$_{V1}$, US-Rule$_{V2}$, US-Rule$_{V3}$, and US-Rule$_{V4}$ to denote the four versions of the US-Rule algorithm. These four versions use the Strategy \ref{Strategy:UIP} and Strategy \ref{Strategy:USRP}. In addition, US-Rule$_{V1}$ only uses REUCP, US-Rule$_{V2}$ uses REUCP and four pruning strategies (including LEEUP, REEUP, LERSUP, and RERSUP). US-Rule$_{V3}$ uses four pruning strategies and REURP. As for US-Rule$_{V4}$, it uses all pruning strategies including REUCP, four pruning strategies, and REURP.

We conducted our experiments in a PC equipped with a 64-bit Windows 10 operating system, 3.6 GHz AMD Ryzen 5 3600 CPU, and 8 GB RAM. Both HUSRM and US-Rule algorithms are implemented in Java language. The details of the experiment are given below.

\subsection{Datasets for the experiment}

To assess the efficiency and scalability of the US-Rule algorithms on different characteristic datasets, we use six datasets. There are four real-life datasets and two datasets. In our experiment, the real-life datasets are generated from book, click-stream, and sign language. They are generated from linguistic datasets, and they can be obtained and transformed from a page or a book. It can be considered that we test a single item sequence in real datasets and multiple item sequences in synthetic datasets. We use bible, kosarak, leviathan, and sign as four real-life datasets which can be obtained from SPMF\footnote{\url{http://www.philippe-fournier-viger.com/spmf/}}. These include sparse datasets, dense datasets, and long sequence datasets, allowing the experiments to evaluate the algorithms from different types of data. As for synthetic datasets, we can consider that they are generated from IBM data generator \cite{agrawal1995mining}. The details of all experimental datasets are shown in Table \ref{DATA} and the description of them are given as following.

\begin{table}[!h]
	\caption{Details of different experimental datasets}  
	\label{DATA}
	\centering
	\begin{tabular}{|c|c|c|c|c|c|}
		\hline
		\textbf{Dataset} & \textbf{$\vert \mathcal{D} \vert$} & \textbf{$\vert \textit{I} \vert$} & \textbf{AvgLen} & \textbf{AvgEle} &  \textbf{Type} \\ \hline   \hline   
		
		Bible & 36369 & 13905	& 21.64 & 1.0  & Moderately dense, slightly long \\ \hline
		Kosarak10k	& 10000 & 10094	& 8.14 & 1.0  & Sparse, short \\ \hline
		Leviathan & 5834 & 9025	& 33.81 & 1.0  & Slightly dense, moderately long \\ \hline
		Sign	& 730 & 267	& 51.99 & 1.0  & Dense, long \\ \hline
		Syn10k	& 10000 & 7312	& 27.11 & 4.35  & Moderately dense, moderately long \\ \hline
		Syn20k	& 20000 & 7442	& 26.97 & 4.33  & Moderately dense, moderately long \\ \hline
		
	\end{tabular}
\end{table}

$ \bullet $ \textbf{Bible} is a moderately dense dataset converted from the Bible. It is also a slightly long sequence and one item dataset, with each sequence corresponding to a sentence of the Bible.

$ \bullet $ \textbf{Kosarak} is a particularly sparse short sequence dataset of click-stream from a Hungarian online news website. Likewise, it  is also an item-based dataset.

$ \bullet $ \textbf{Leviathan} is a dataset generated from the novel Leviathan written by Tomas Hobbes. It is a slightly dense and moderately long sequence dataset, and each item contains one item.

$ \bullet $ \textbf{Sign} is a sign language and item-based dataset. It is the densest in our selection of real datasets, with many long sequences and a few hundred items. 

$ \bullet $ \textbf{Syn10k} is a synthetic dataset generated from IBM data generator. It has 10,000 sequences and many moderately long sequences. Different from real datasets, it is a multiple-items-based dataset.

$ \bullet $ \textbf{Syn20k} is also a synthetic dataset generated from IBM data generator. It has 20,000 sequences. Different from the Syn10k, the number of different items is increased a little.

\subsection{Efficiency analysis}

In this subsection, we perform a number of experiments in terms of the running time. Four versions are used to compare the state-of-the-art HUSRM. When setting \textit{minconf} to 0.6 and setting various \textit{minutil} on different datasets, the experimental results are shown in Fig. \ref{Runtime}.

\begin{figure}[ht]
	\centering
	\includegraphics[clip,scale=0.07]{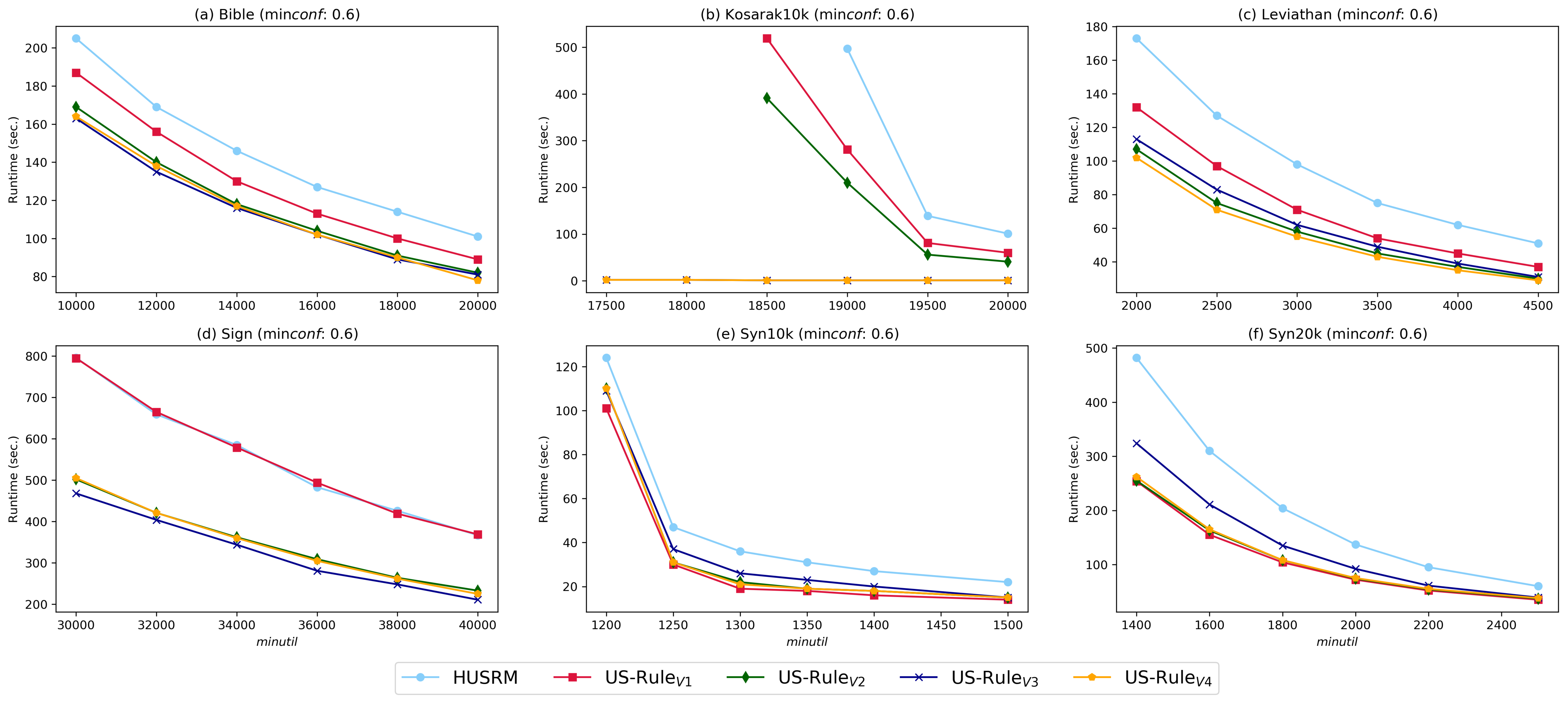}
	\caption{Runtime under various \textit{minutil}}
	\label{Runtime}
\end{figure}

 From the results, we can see that all four variants of US-Rule are better than HUSRM, especially on dense and long sequence datasets. On moderately datasets, such as Bible, Leviathan, and Synthetic datasets, we can clearly see that both REUCP and upper bounds pruning strategies can be effective. US-Rule$_{V4}$ is the fastest algorithm on datasets Bible and Leviathan. The difference between US-Rule$_{V2}$, US-Rule$_{V3}$, and US-Rule$_{V4}$ is not significant on Bible. While on Leviathan, we can see that as the number of optimization strategies increases, the running efficiency improves. On the two synthetic datasets, the running times of the three variants US-Rule$_{V1}$, US-Rule$_{V2}$ and US-Rule$_{V4}$ are almost the same. US-Rule$_{V1}$ is the fastest algorithm on datasets Syn10k and Syn20k. The reason is that the effect of REUCP is very obvious than the  pruning strategies on synthetic datasets. Although pruning strategies are powerful, they are not as effective as REUCP. On the Kosarack, a sparse and short sequence dataset, US-Rule$_{V3}$ and US-Rule$_{V4}$ can achieve a significant performance. REURP can quickly filter many unpromising items after the re-calculation of SEU. Among HUSRM, US-Rule$_{V1}$ and US-Rule$_{V2}$, there is also a gap in their running time, and usually HUSRM is the slowest one. As \textit{minutil} decreases, these algorithms can not discover HUSRs from datasets within an acceptable running time. On the dense dataset Sign, we can see that the running time of US-Rule$_{V1}$ and HUSRM are basically the same. This is consistent with our expectation that RECUP does not reduce much invalid computation on dense datasets. As for US-Rule$_{V2}$ and US-Rule$_{V3}$, their running time is almost the same. US-Rule$_{V4}$, which uses REURP, further optimizes the runtime. In summary, we can draw the conclusion as follows: on different types of datasets, although there are performance differences between the variants of US-Rule algorithm using different strategies, they are always more efficient than HUSRM.

\subsection{Memory evaluation}

The experimental results for the usage of memory are shown in Fig. \ref{memory}. On the Bible dataset, we can see that the memory consumed by HUSRM, US-Rule$_{V2}$, and US-Rule$_{V4}$ is not very different. For US-Rule$_{V3}$, which does not use SEUCP, it consumes the least amount of memory. While US-Rule$_{V1}$, which only uses SEUCP, consumes more memory. On Kosarack10k, since REURP filters many unpromising items, it makes the memory consumption of US-Rule$_{V3}$ and US-Rule$_{V4}$ particularly low. For US-Rule$_{V1}$ and US-Rule$_{V2}$, the use of REUCM increases the memory consumption. On the Leviathan dataset, the memory consumption fluctuates a bit. We can still see that it is similar to the case on the Bible dataset, i.e., US-Rule$_{V3}$ generally uses less amount of memory. On the Sign dataset, the memory of each algorithm does not differ much, which may be due to the very small number of items in Sign. On the first synthetic dataset, we can see that HUSRM and US-Rule$_{V3}$ use more memory. REUCP works well on this dataset. In contrast, on the second synthetic dataset, the algorithm which uses REUCP usually consumes more memory. This is exactly as expected, since the second synthetic dataset is denser, making REUCP less effective.

\begin{figure}[ht]
	\centering
	\includegraphics[clip,scale=0.07]{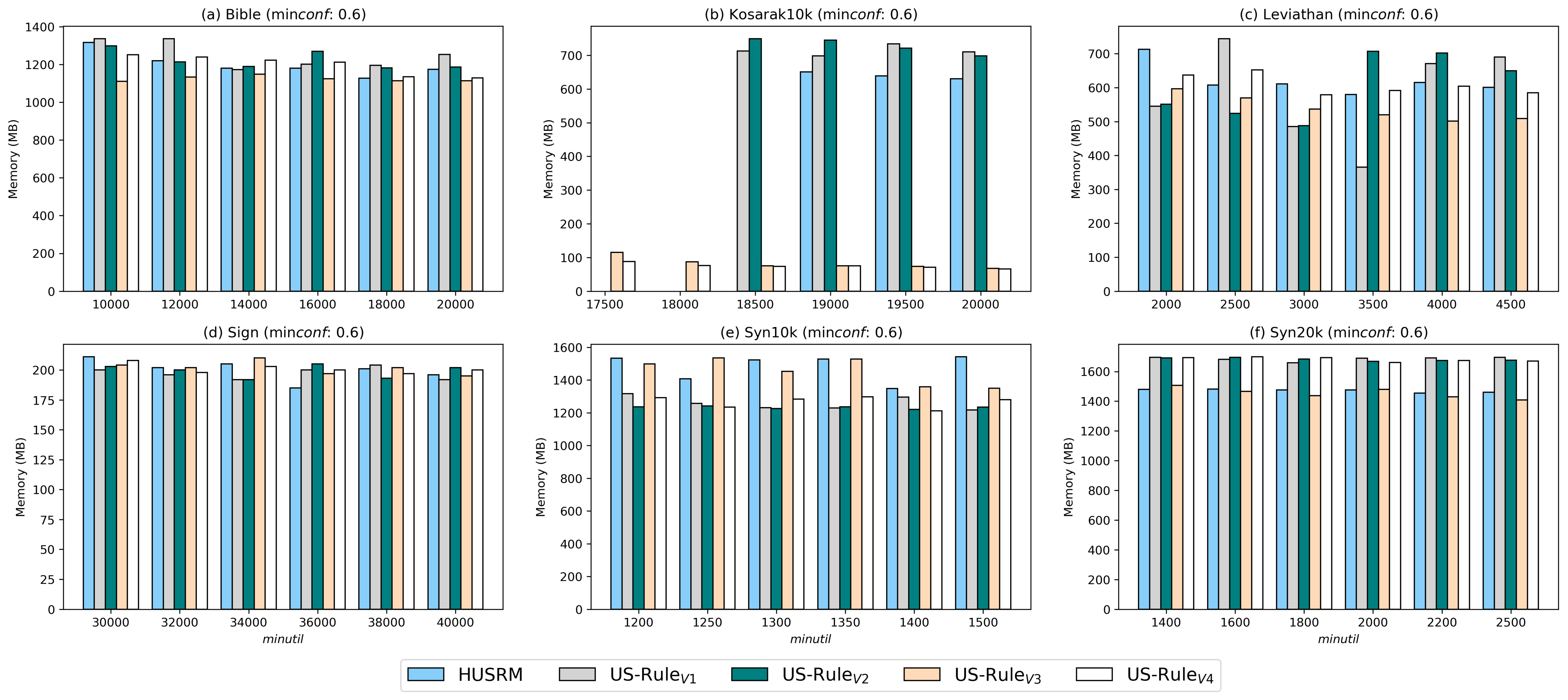}
	\caption{Memory usage under various \textit{minutil}}
	\label{memory}
\end{figure}

\subsection{Scalability}

In this subsection, to better compare the scalability of each algorithm, we use synthetic datasets and increase their size from 15k to 20k. We  evaluate these algorithms in terms of running time and memory consumption. The \textit{minconf} is set to 0.6 and the \textit{minutil} is set to 2000. The experimental results are shown in Fig. \ref{scalability}. Obviously, as the dataset size increased, both the running time and memory consumption of each algorithm are increased too. The experimental results are also expected, and the algorithm using the most optimized strategies can always achieve the best efficiency, but  consumes much memory. The algorithm using only the pruning strategies can improve the efficiency and does not cost much memory. In terms of runtime, although all algorithms show a linear increasing trend, all versions of US-Rule do not increase as much as that of HUSRM.

\begin{figure}[ht]
	\centering
	\includegraphics[clip,scale=0.07]{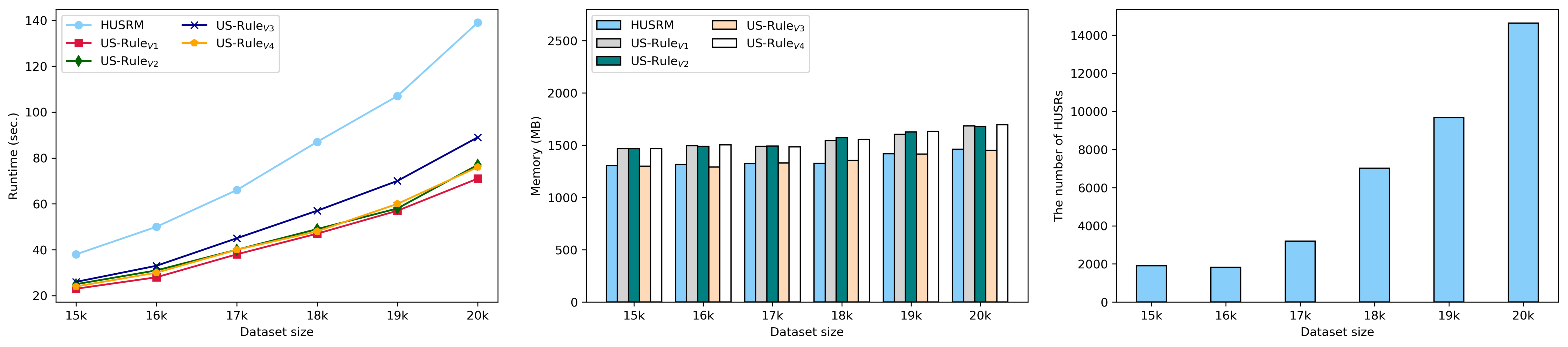}
	\caption{Scalability when \textit{minconf} = 0.6 and \textit{minutil} = 2000}
	\label{scalability}
\end{figure}

\section{Conclusions}  
\label{sec:conclusion}

In this paper, to improve the efficiency of high-utility sequential rule mining, we proposed a novel US-Rule algorithm. Based on REUCP, US-Rule can achieve a better performance. In order to solve the efficiency on dense or long sequence datasets, we introduced four tighter upper bounds and proposed their corresponding pruning strategies. To better address the issue on sparse and short sequence datasets, we further proposed REURP to filter unpromising items by recalculating the SEU of items multiple times. Finally, we conducted a large number of experiments on different datasets to evaluate different variants of the US-Rule algorithm. Combining the results of each experiment, we can say that US-Rule has better performance and scalability than the state-of-the-art HUSRM algorithm. Our future work may apply the US-Rule algorithm to deal with some applications in other fields, and also will adapt US-Rule with several constraints for more valuable sequence mining.

\section*{Acknowledgment}

This research was partially supported by National Natural Science Foundation of China (Grant No. 62002136), Guangdong Basic and Applied Basic Research Foundation, Guangzhou Basic and Applied Basic Research Foundation (Grant No. 202102020277).

\bibliographystyle{ACM-Reference-Format}
\bibliography{reference}


\end{document}